\DeclarePairedDelimiterX{\inp}[2]{\langle}{\rangle}{#1, #2}
\newcommand\op[1]{\left\|#1\right\|}
\newcommand\fro[1]{\left\| #1 \right\|_{\rm F}}
\newcommand\nuc[1]{\left\| #1 \right\|_{\ast}}
\newcommand\bnorm[1]{\left\|#1\right\|_{\bmB,2}}
\newcommand\binfnorm[1]{\left\|#1\right\|_{\bmB,\infty}}
\def\vec{\textsf{vec}}
\newcommand{\bomega}{\boldsymbol{\omega}}
\newcommand{\bSigma}{\boldsymbol{\Sigma}}
\newcommand{\bGamma}{\boldsymbol{\Gamma}}
\newcommand{\ltwo}[1]{\|#1\|_2}
     \def\EE{\mathbb{E}}
     \def\MM{\mathbb{M}}
     \def\RR{\mathbb{R}}
     \def\TT{\mathbb{T}}
     \def\VV{\mathbb{V}}
\newcommand{\bLa}{\boldsymbol{\Lambda}}
\newcommand{\bOm}{\boldsymbol{\Omega}}
\newcommand{\rank}{\text{rank}}
\newcommand{\range}{\text{range}}
\def\bdiag{\textsf{bdiag}}
\def\wt{\widetilde}
\def\no{\notag}
\def\bzero{\boldsymbol{0}}
\def\dim{\textsf{dim}}
\def\supp{\textsf{supp}}
\def\calP{{\cal  P}}
\def\mB{{\mathcal B}}
\def\mG{{\mathcal G}}
\def\mH{{\mathcal H}}
\def\mI{{\mathcal I}}
\def\mP{{\mathcal P}}
\def\mZ{{\mathcal Z}}
\def\bmB{{\boldsymbol \mB}}
\def\bmG{{\boldsymbol \mG}}
\def\bmH{{\boldsymbol \mH}}
\def\bmI{{\boldsymbol \mI}}
\def\bmP{{\boldsymbol \mP}}
\def\bmZ{{\boldsymbol \mZ}}
\def\A{{\boldsymbol A}}
\def\B{{\boldsymbol B}}
\def\C{{\boldsymbol C}}
\def\D{{\boldsymbol D}}
\def\F{{\boldsymbol F}}
\def\H{{\boldsymbol H}}
\def\I{{\boldsymbol I}}
\def\L{{\boldsymbol L}}
\def\M{{\boldsymbol M}}
\def\R{{\boldsymbol R}}
\def\S{{\boldsymbol S}}
\def\U{{\boldsymbol U}}
\def\V{{\boldsymbol V}}
\def\W{{\boldsymbol W}}
\def\X{{\boldsymbol X}}
\def\Y{{\boldsymbol Y}}
\def\Z{{\boldsymbol Z}}
\def\b{{\boldsymbol b}}
\def\e{{\boldsymbol e}}
\def\k{{\boldsymbol k}}
\def\l{{\boldsymbol l}}
\def\x{{\boldsymbol x}}
\def\y{{\boldsymbol y}}
\def\z{{\boldsymbol z}}
\def\aal{\boldsymbol{\alpha}}
\def\bbe{\boldsymbol{\beta}}
\def\gga{\boldsymbol{\gamma}}
\def\oom{\boldsymbol{\omega}}
\def\La{{\boldsymbol\Lambda}}
\def\Sig{\boldsymbol{\Sigma}}
\def\wt{\widetilde}
\newtheorem{thm}{Theorem}[section]
\newtheorem{lem}[thm]{Lemma}
\newdefinition{defi}[thm]{Definition}
\newdefinition{rmk}[thm]{Remark}
\newdefinition{notation}[thm]{Notation}
\newdefinition{example}[thm]{Example}
\newtheorem{assump}[thm]{Assumption}
\newdefinition{obs}[thm]{Observation}
\newproof{pf}{Proof}
\newproof{poth4}{Proof of \cref{Th4}}
\numberwithin{equation}{section}
\title{Restoration Guarantee of Image Inpainting via Low Rank Patch Matrix Completion}
\author[HKUST]{Jian-Feng Cai\fnref{JFC}}
\ead{jfcai@ust.hk}
\author[TJU]{Jae Kyu Choi\fnref{JKC}}
\ead{jaycjk@phantomics.io}
\author[HKUST]{Jingyang Li\fnref{JFC}\corref{cor}}
\ead{jlieb@connect.ust.hk}
\author[SZU]{Guojian Yin\fnref{GY}}
\ead{yin@szu.edu.cn}
\address[HKUST]{Department of Mathematics, The Hong Kong University of Science and Technology, Hong Kong}
\address[TJU]{Phantomics Inc, Seoul, 07788, Republic of Korea.}
\address[SZU]{Institute for Advanced Study, Shenzhen University, Shenzhen, Guangdong, China}
\begin{document}
	
	\begin{abstract}
		In recent years, patch-based image restoration approaches have demonstrated superior performance compared to conventional variational methods. This paper delves into the mathematical foundations underlying patch-based image restoration methods, with a specific focus on establishing restoration guarantees for patch-based image inpainting, leveraging the assumption of self-similarity among patches.
		To accomplish this, we present a reformulation of the image inpainting problem as structured low-rank matrix completion, accomplished by grouping image patches with potential overlaps. By making certain incoherence assumptions, we establish a restoration guarantee, given that the number of samples exceeds the order of $r\log^2(N)$, where $N\times N$ denotes the size of the image and $r>0$ represents the sum of ranks for each group of image patches.
		Through our rigorous mathematical analysis, we provide valuable insights into the theoretical foundations of patch-based image restoration methods, shedding light on their efficacy and offering guidelines for practical implementation.
	\end{abstract}
	
	\begin{keyword}
		Image restoration \sep image patches \sep regularization \sep low rank matrix completion \sep inpainting
	\end{keyword}
	
	\maketitle
	
	\pagestyle{myheadings}
	\thispagestyle{plain}
	\markboth{Jian-Feng Cai, Jae Kyu Choi, and Jingyang Li}{Restoration Guarantee of Image Inpainting via Low Rank Patch Matrix Completion}
\section{Introduction}
Image restoration is a widely studied inverse problem that aims to restore a high-quality image from a degraded version caused by various factors such as imaging, acquisition, and communication processes. Due to the ill-posed nature of linear inverse problems encountered in image restoration, it is essential to incorporate prior information about the target image to achieve a high-quality recovery. Consequently, before delving into image restoration tasks, it is natural and fundamental to address the question of identifying an appropriate model or descriptor for images.
Several well-established models have been widely used in image restoration, including variational approaches such as the Rudin-Osher-Fatemi model \cite{L.I.Rudin1992}, the inf-convolution model \cite{A.Chambolle1997}, and the total generalized variation (TGV) model \cite{K.Bredies2014,K.Bredies2010}. In addition to these variational approaches, applied harmonic analysis methods have also been introduced, such as curvelets \cite{E.Candes2006}, Gabor frames \cite{H.Ji2017,Mallat2008}, shearlets \cite{G.Kutyniok2011}, complex tight framelets \cite{B.Han2014}, and wavelet frames \cite{C.Bao2016,J.F.Cai2010,J.F.Cai2009/10}. The common underlying concept in these approaches is to seek a sparse approximation of images by employing a pre-designed linear transformation in conjunction with a sparsity-promoting regularization term. It is worth noting that the thresholding algorithms inherent in these sparse regularization techniques can be linked to nonlinear evolution partial differential equations (PDEs) \cite{B.Dong2017}. As a result, these approaches aim to preserve the sharpness of image singularities while simultaneously regularizing smooth regions.

While the connections between variational approaches and applied harmonic analysis approaches have expanded the scope of mathematical modeling and numerical algorithms \cite{J.F.Cai2012,J.K.Choi2020,B.Dong2017a}, it is important to recognize that these approaches typically assume a target image to be a piecewise smooth function. Consequently, they excel in restoring images with prominent cartoon-like structures. However, natural images are incredibly diverse, with each image containing its unique textures, features, and, in addition, the cartoon-like components. These additional features cannot be adequately modeled or approximated as piecewise smooth functions.
In the literature, there are some variational approaches \cite{S.Osher2003,L.A.Vese2003} and applied harmonic analysis approaches \cite{J.F.Cai2009/10,H.Schaeffer2013} that aim to incorporate texture modeling simultaneously. However, when we endeavor to model or regularize each layer of an image based on its distinct properties, these model-based approaches may fall short in capturing the true diversity of natural images.

Recently, a plethora of image restoration approaches have emerged, employing adaptive techniques on image patches either within the target image itself or across a database of images.  
%Denoising methods like non-local means \cite{A.Buades2005}, BM3D \cite{K.Dabov2006}, dictionary learning \cite{M.Elad2006}, weighted nuclear norm minimization \cite{S.Gu2014} exploited the self-similarity within patches from the image.
Existing denoising methods \cite{A.Buades2005,K.Dabov2006,M.Elad2006,S.Gu2014} exploited the self-similarity within patches from the image.
In particular, similar patches are assumed to lie on some low dimensional subspace \cite{H.Ji2010} or low dimensional manifold \cite{S.Osher2017, laus2017nonlocal}.
%Patches also exhibit redundancy at various scales, enabling classical patch-based methods to create diverse images from a single source image through slight randomness at low resolution, outperforming single-image GANs \cite{glasner2009super, granot2022drop}. 
Another approach approximates patch distributions using parametric models, like the non-local Bayesian method assuming Gaussian patch distributions \cite{lebrun2013nonlocal} or Gaussian mixture model \cite{zoran2011learning}.
%Moreover, patches in natural images exhibit redundancy at various scales, leading to classical multi-image super-resolution within the same scale and example-based super-resolution across different scales \cite{glasner2009super}.
%Notably, classical patch-based methods can unconditionally produce diverse images from a single natural image by introducing slight randomness at low resolution, surpassing single-image GAN \cite{granot2022drop}.
%Another line of research approximates the patch distribution within certain parametric families. 
%A non-local Bayesian patch-based method was proposed in \cite{lebrun2013nonlocal}, where the patches are assumed to follow a Gaussian distribution. 
%Alternatively, one may minimizing the sum of data fidelity term and negative log likelihood of image patches under certain prior distribution such as Gaussian mixture model \cite{zoran2011learning}.
%Notice when the image is contaminated by Gaussian noise, these two approaches are equivalent with the same choice of prior. 
%More recently, 
Interestingly, convolution neural networks (CNNs) can also be viewed as a patch-based method that mysteriously exploits the underlying patch structures. Despite of its complication in theoretical explanation, it has demonstrated remarkable success and widespread utility across various image processing tasks such as image classification \cite{Hou_2016_CVPR,sharma2017patch}, segmentation \cite{ronneberger2015u}, and denoising \cite{zhang2019patch}. However, a notable limitation of these deep learning-based approaches lies in their substantial data requirements.
To this end, \cite{altekruger2022patchnr} proposed a patch normalizing flow regularizer trained on patches from a limited number of clean images.
The success of CNNs, is attributed to the implicit regularization capabilities embedded within CNNs when dealing with image patches.
This is also the idea of deep image prior framework, as proposed by \cite{ulyanov2018deep}, which asserts that the architecture of a generator network alone is sufficiently adept at capturing a substantial portion of low-level image statistics, even without training data.

These patch-based approaches have demonstrated significant improvements in restoration quality compared to conventional variational methods. The foundation of these approaches lies in the assumption of self-similar patches. Exploiting this self-similarity allows for the inference of information about degraded regions from remote regions, enabling the utilization of image structures more effectively during the restoration process. Despite the recent advancements in modeling and numerical algorithms, it is worth noting that there is a paucity of mathematical analysis concerning patch-based image restoration approaches in the current body of literature.

To fill this gap, we consider a simple model with the focus on image inpainting, which can be formulated as the following optimization problem:
\begin{align}\label{eq:iminp}
	\text{find}~~\z~~\text{subject to}~~\y=\bmP_{\La}\z+\b,
\end{align}
where $\bOm$ represents the set of indices of the image, $\La\subseteq\bOm$ represents the set of indices where the observed image $\y$ is available, and $\b$ characterizes the noise. We assume that the indices in $\La$ are i.i.d. samples drawn from the uniform distribution over $\bOm$, denoting the cardinality of $\La$ as $m$. It is worth mentioning that we could also consider the scenario where $\La$ is drawn from all $m$-subsets of $\bOm$ to achieve the same error bound without the possibility of collisions \cite{Gross2011}. However, throughout this paper, we capitalize on the advantages of using i.i.d. samples drawn from $\bOm$. The task of inpainting $\z$ on $\bOm\setminus\La$ from the observed image $\y$ arises when certain pixels are randomly missing due to factors such as unreliable communication channels \cite{T.F.Chan2006}, dust spots or cracks in film \cite{H.Ji2011}, or corruption by salt-and-pepper noise \cite{R.H.Chan2005}.

In many patch-based image restoration methods, image patches are grouped based on their similarity or linear correlation. Specifically, we define a linear operator $\bmH$, which maps an $N\times N$ discrete image $\z$ into an $n^2\times N_p$ patch matrix $\bmH(\z)$ by concatenating its $n\times n$ patches into column vectors. Here, $N_p$ denotes the number of patches, which depends on the boundary conditions defining the image patches. Using two-dimensional multi-indices, we express this as:
\begin{align}\label{def:patchmtx}
	\bmH(\z)(\k,\l) = \z(\k+\l),
\end{align}
where all indices start from $0$ throughout this paper.
Assuming an underlying true image $\z^*$, we consider grouping $n\times n$ patches of $\z^*$ into $K$ groups, $\bOm_1,\cdots,\bOm_K\subseteq\bOm$, with possible overlaps based on their linear correlations. For each group $k=1,\cdots,K$, the matrix $\bmH(\z^*)\big|_{\bOm_k}$ can be well approximated by a low-rank matrix. Motivated by this, we propose the following nuclear norm minimization problem for patch-based image inpainting using the notation in \eqref{eq:iminp}:
\begin{align}\label{prob:nnmin}
	\min_{\z}~\sum_{k=1}^K\nuc{\bmH(\z)\big|_{\bOm_k}}~~~~\text{subject to}~~~\y=\bmP_{\La}\z+\b.
\end{align}
In this paper, our goal is to demonstrate that the problem in \cref{prob:nnmin} enables perfect restoration (when $\b=\mathbf{0}$) or stable restoration (when $\b$ is bounded noise) as long as the number of revealed entries exceeds the order of $r\log^2\left|\bOm\right|$, where $r=r_1+\cdots+r_K$, and $r_k=\rank\left(\bmH(\z^*)\big|_{\bOm_k}\right)$ for $k=1,\cdots,K$.

\subsection{Key observation and overview of related works}

Our mathematical analysis is based on the observation that most existing patch based approaches can be viewed as various matrix factorizations of structured matrices with low dimensional properties. More precisely, developing theory of these patch-based approaches are equivalent to developing theory of low dimensional structured matrix restoration problems. The low dimensionality is often manifested as a low-rank property, which allows us to leverage extensive studies on low-rank matrix restoration in the literature  \cite{J.F.Cai2010a,E.J.Candes2009,Y.Chi2019,Gross2011,Recht2011,B.Recht2010}. In fact, our model has close relation with the following existing models. 

\begin{enumerate}[label = $\bullet$]
	\item \textit{K-SVD}. 
	Consider the K-SVD method for image denoising \cite{M.Aharon2006}, which explicitly learns a dictionary from image patches, assuming sparse coefficients. Given a noisy image, the goal of K-SVD denoising is to restore a clear image $\z$ that satisfies the factorization $\bmH\left(\z\right)\approx\D\C^T$, where $\D\in\mathbb{R}^{n^2\times d}$ is a dictionary with unit length, and $\C\in\mathbb{R}^{N_p\times d}$ is a sparse coefficient matrix.
	
	\item \textit{Data-driven tight frame. }
	Another example is the data-driven tight frame \cite{J.F.Cai2014}, which aims to construct adaptive discrete wavelet tight frame filter banks from a given clean or noisy image $\z^*$. By relaxing the unitary extension principle and considering square orthogonal matrix constraints through filter concatenation, the data-driven tight frame aims for the factorization $\bmH\left(\z^*\right)\approx\D\C^T$, where $\D\in\RR^{n^2\times n^2}$ is an orthogonal matrix representing the tight frame filter banks, and $\C\in\RR^{N_p\times n^2}$ is a sparse matrix with each row corresponding to the sparse frame coefficients.
	
	\item \textit{Extension of superposition of complex sinusoids. }
	In various signal processing applications such as fluorescence microscopy, radar imaging, nuclear magnetic resonance spectrography, and others, the underlying signal can be expressed as a superposition of $r$ complex sinusoids \cite{J.F.Cai2018a,J.F.Cai2019}. These signals satisfy a linear recurrence formula (LRF), which implies that each patch can be expressed as a causal linear combination of its precedent patches. Consequently, the resulting Hankel matrix satisfies $\rank\left(\bmH\left(\z^*\right)\right)\leq r$. Therefore, the singular value decomposition of $\bmH\left(\z^*\right)$ yields a data-driven tight frame factorization $\bmH\left(\z^*\right)=\D\C^T$, where $\D$ is an orthogonal matrix, and $\C$ is a frame coefficient matrix with at most $r$ nonzero row vectors \cite{J.F.Cai2022,J.F.Cai2020}.
\end{enumerate}

Indeed, many patch-based approaches employ grouping strategies to exploit the linear correlation among image patches. The works in \cite{S.Gu2014,H.Ji2010,L.Ma2017}, for example, collect similar patches into a matrix and promote (weighted) nuclear norm minimization to restore a patch matrix, ensuring that each grouped submatrix is of low rank. Let $\bOm_1,\cdots,\bOm_K\subseteq\bOm$ be $K$ possibly overlapped groups, where $\bmH\left(\z^*\right)\big|_{\bOm_k}$ can be well approximated by a matrix of rank $r_k\ll\min\left\{n^2,|\bOm_k|\right\}$ for $k=1,\cdots,K$. With this approximation, we have $\bmH\left(\z^*\right)\big|_{\bOm_k}\approx\D_k\C_k^T$ where $\D_k\in\RR^{n^2\times r_k}$ and $\C_k\in\RR^{|\bOm_k|\times r_k}$. This leads to the following factorization:
\begin{align}\label{eq:groupfac}
	\H^*=\left[\begin{array}{ccc}
		\bmH\left(\z^*\right)\big|_{\bOm_1}&&\\
		&\ddots&\\
		&&\bmH\left(\z^*\right)\big|_{\bOm_K}
	\end{array}\right]\approx\left[\begin{array}{ccc}
		\D_1&&\\
		&\ddots&\\
		&&\D_K
	\end{array}\right]\left[\begin{array}{ccc}
		\C_1^T&&\\
		&\ddots&\\
		&&\C_K^T
	\end{array}\right]:=\D\C^T.
\end{align}
This representation can be seen as a dictionary representation, with the sparsity pattern of $\C$ determined by the grouping.

Additionally, some patch-based approaches assume that image patches can be embedded into a low-dimensional manifold \cite{S.Osher2017}. In this case, neighboring patches, according to the induced metric on the manifold, can be well approximated by a tangent space of the manifold, which corresponds to a low-rank matrix. Therefore, similar to the previous case, the grouping argument can be applied to obtain the factorization \cref{eq:groupfac}.

\subsection{Organization and notation of paper}

The rest of this paper is organized as follows. We begin with our patch based image inpainting model in consideration and our main results in Section \ref{sec:modelmainresults}. Proofs of our main results are given in Section \ref{sec:proofs}. The performance of the inpainting model is demonstrated by numerical experiments in Section \ref{sec:numerical}. Finally, Section \ref{sec:conclusion} concludes this paper with some relevant future directions.

Throughout this paper, all two dimensional discrete images will be denoted by the bold faced lower case letters. Note that we consider a two dimensional image as a real valued function defined on the square grid $\bOm=\left\{0,\cdots,N-1\right\}^2$, which can also be identified with an $N^2\times1$ vector whenever convenient. In this sequel, the $\ell_p$ norm of an image $\z$ is defined as
\begin{align}\label{finitelp}
	\left\|\z\right\|_p^p&=\sum_{\k\in\bOm}\left|\z(\k)\right|^p
\end{align}
for $p\in[1,\infty)$, and
\begin{align}\label{linfty}
	\left\|\z\right\|_{\infty}&=\max_{\k\in\bOm}\left|\z(\k)\right|.
\end{align}
For $p=0$, the $\ell_0$ seminorm encodes the number of nonzero entries:
\begin{align}\label{l0}
	\left\|\z\right\|_0=\#\left\{\k\in\bOm:\z(\k)\neq0\right\}.
\end{align}
Meanwhile, all matrices will be denoted by the bold faced upper case letters, and the $i$th row and the $j$th column of a matrix $\Z$ will be denoted by $\Z^{(i,:)}$ and $\Z^{(:,j)}$, respectively, and the $(i,j)$th entry of a matrix $\Z$ will be denoted by $\Z_{ij}$. In addition, $\op{\Z}$, $\fro{\Z}$, $\left\|\Z\right\|_{\infty}$, and $\left\|\Z\right\|_0$ denote the spectral norm, Frobenius norm, the maximum magnitude of the entries, and the number of nonzero entries, respectively. For matrices $\Z_1$ and $\Z_2$, the inner product is defined as $\left\langle\Z_1,\Z_2\right\rangle=\textsf{trace}\left(\Z_2^T\Z_1\right)$, and for images $\z_1$ and $\z_2$, the inner product is defined as
\begin{align}\label{l2inner}
	\left\langle\z_1,\z_2\right\rangle=\sum_{\k\in\bOm}\z_1(\k)\z_2(\k).
\end{align}
For positive integer $N$, we denote $\left[N\right]=\left\{0,\cdots,N-1\right\}$; we will write $\bOm=\left\{0,\cdots,N-1\right\}^2=\left[N\right]^2$ in what follows.

All operators will be denoted by bold faced calligraphic letters. For instance, $\bmI$ denotes the identity operator, while $\I$ denotes the identity matrix. In addition, recall that $\bmH$ is an operator which maps an $N\times N$ discrete image $\z$ to an $n^2\times N_p$ matrix $\bmH\left(\z\right)$ by concatenating all $n\times n$ patches of $\z$ into column vectors. Notice that the definition \cref{def:patchmtx} of $\bmH\left(\z\right)$ involves the boundary condition defining $n\times n$ patches of $\z$. For instance, when $N_p=(N-n+1)^2$ (i.e. we ignore patches involving the boundary extensions), $\bmH\left(\z\right)$ is a two-fold Hankel matrix. Nevertheless, since the definition shares the property that $\bmH\left(\z\right)$ is constant along the antidiagonals (in the sense of two dimensional multi-indices) independent of the boundary condition, we will continue to call $\bmH\left(\z\right)$ a Hankel matrix without ambiguity. Finally, we use $C$, $C_1$, $c$, $c_0$, $c_1$, etc. to denote absolute constants whose values may be different from line to line.

\section{Model and main results}\label{sec:modelmainresults}
In this paper, we consider the true image $\z^* \in \mathbb{R}^{\bOm}$, where $\bOm = [N]^2$ represents the set of indices for the image pixels. However, only partial information of $\z^*$ is available for restoration. Let $\La = \{\k_1, \ldots, \k_m\} \subseteq \bOm$ be the set of indices where the information of $\z^*$ is available. Typically, the image inpainting problem is formulated as follows:
\begin{align}\label{prob:imginp}
	\text{Find}~\z~\text{such that}~~\bmP_{\La}(\z) = \y,
\end{align}
where $\y$ represents the partial measurements of $\z^*$ given by:
\begin{align}\label{noisefreemeas}
	\y = \bmP_{\La}(\z^*),
\end{align}
and $\bmP_{\La}(\z) = \sum_{i = 1}^m \z(\k_i)\e_{k_{i,1}}\e_{k_{i,2}}^T$. The goal is to recover the original image $\z^*$ from the available measurements $\y$.

Generally speaking, it is impossible to restore $\z^*$ from its partial entries $\bmP_{\La}\left(\z^*\right)$ as we can take arbitrary values on the missing region $\bOm\setminus\La$ without violating the equality constraint in \cref{prob:imginp}. Throughout this paper, we will consider the following prior of $\z^*$ in the patch domain. Recall that $\bmH\left(\z^*\right)$ is an $n^2\times N_p$ two-fold Hankel matrix generated by concatenating all $n\times n$ patches of $\z^*$ into column vectors. Let $\bOm_1,\cdots,\bOm_K\subseteq\bOm$ be $K$ possibly overlapped groups such that patches in each $\bOm_k$ are linearly correlated. Assume for simplicity that $\bOm_1,\cdots,\bOm_K$ are given and fixed with $|\bOm_k|=N_p$. Then each $n^2\times N_p$ matrix $\bmH\left(\z^*\right)\big|_{\bOm_k}$ is of low rank, say $\rank\left(\bmH(\z^*)\right)\big|_{\bOm_k}=r_k<\min\left\{n^2,N_p\right\}$. Hence, instead of directly restoring $\z^*$ directly, we can restore the $n^2\times N_p$ two-fold Hankel matrix such that each submatrix $\bmH\left(\z^*\right)\big|_{\bOm_k}$ is of low rank. This leads us to solve
\begin{align}\label{rankmin:noisefree}
	\min_{\z}\sum_{k=1}^K\rank\left(\bmH\left(\z\right)\big|_{\bOm_k}\right)~~\text{subject to}~~\bmP_{\La}\left(\z\right)=\y.
\end{align}
In practice, the partial measurement $\y$ can contain noise, i.e.
\begin{align}\label{noisymeas}
	\y=\bmP_{\La}\left(\z^*\right)+\b.
\end{align}
In this case, we solve
\begin{align}\label{rankmin:noisy}
	\min_{\z}\sum_{k=1}^K\rank\left(\bmH\left(\z\right)\big|_{\bOm_k}\right)~~\text{subject to}~~\left\|\bmP_{\La}\left(\z\right)-\y\right\|_2\leq\sqrt{m}\delta,
\end{align}
where $\delta\geq m^{-1/2}\ltwo{\b}$ characterizes the noise level.

The rank minimization problem is known to be NP-hard in general \cite{E.J.Candes2009}. However, the nuclear norm minimization has been widely recognized as an effective approach for approximating low-rank matrices \cite{B.Recht2010}. Therefore, we can consider solving the following alternative formulations to address the rank minimization problem.
For the noise-free case, where the partial measurement $\y$ is equal to $\bmP_{\La}(\z^*)$, we can solve the problem:
\begin{align}\label{prob:noisefree}
	\min_{\z} \sum_{k=1}^K \nuc{\bmH(\z)\big|_{\bOm_k}} \quad \text{subject to} \quad \bmP_{\La}(\z) = \y.
\end{align}
This formulation promotes low-rank submatrices within each group while satisfying the observed measurements constraint.

When noise is present in the partial measurement $\y$, i.e., $\y = \bmP_{\La}(\z^*) + \b$, we can solve the following problem to account for the noise:
\begin{align}\label{prob:noise}
	\min_{\z} \sum_{k=1}^K \nuc{\bmH(\z)\big|_{\bOm_k}} \quad \text{subject to} \quad \ltwo{\bmP_{\La}(\z) - \y} \leq \sqrt{m} \delta,
\end{align}
where $\delta = \|\b\|_{\infty}$. This formulation promotes low-rank submatrices while allowing for a bounded noise discrepancy between the observed measurements and the reconstructed image.
By solving either \cref{prob:noisefree} or \cref{prob:noise}, we aim to restore the two-fold Hankel matrix $\bmH(\z^*)$ with low-rank submatrices within each group while respecting the observed measurements constraint.

The effectiveness of low rank priors in image inpainting has been demonstrated in several works, such as \cite{S.Gu2014,H.Ji2010,L.Ma2017}, where significant improvements have been reported. These methods show that utilizing a low rank prior can effectively capture the underlying structure of an image and provide more reliable image representation. Furthermore, the theoretical aspects of low rank matrix completion have been extensively studied in the literature \cite{J.F.Cai2010a,J.F.Cai2018a,J.F.Cai2019,E.J.Candes2009,Y.Chi2019,Gross2011,Recht2011,B.Recht2010}.
Motivated by these theoretical studies, we provide a theoretical result that guarantees exact recovery of patch-based image inpainting when the set of observed indices $\bLa=\{\k_1,\cdots,\k_m\}\subseteq\bOm$ is randomly drawn from a uniform distribution over $\bOm$ in an i.i.d. manner.

Before presenting our main results, we introduce some assumptions regarding the true image $\z^*$ in this paper. To do so, we define $\H^*$ as follows:
\begin{align}\label{DefineH*}
	\H^*=\bdiag\left(\left\{\bmH\left(\z^*\right)\big|_{\bOm_k}:k=1,\cdots,K\right\}\right):=\left[\begin{array}{ccc}
		\bmH\left(\z^*\right)\big|_{\bOm_1}&&\\
		&\ddots&\\
		&&\bmH\left(\z^*\right)\big|_{\bOm_K}
	\end{array}\right].
\end{align}
For $k=1,\cdots,K$, let $\bmH\left(\z^*\right)\big|_{\bOm_k}=\U_k\Sig_k\V_k^T$ be the (compact) SVD, and we define
\begin{align}
	\U&=\bdiag\left(\left\{\U_k:k=1,\cdots,K\right\}\right),\label{USigmaV:U}\\
	\Sig&=\bdiag\left(\left\{\Sig_k:k=1,\cdots,K\right\}\right),\label{USigmaV:Sigma}\\
	\V&=\bdiag\left(\left\{\V_k:k=1,\cdots,K\right\}\right)\label{USigmaV:V}.
\end{align}
And thus $\H^* = \U\Sig\V^T$ is the SVD of $\H^*$. Using these notations, we summarize the assumptions as follows.
%	By \cref{DefineH*}, we have
%	\begin{align}\label{H*SVD}
	%		\H^*=\U\Sig\V^T=\left[\begin{array}{ccc}
		%			\U_1&&\\
		%			&\ddots&\\
		%			&&\U_K
		%		\end{array}\right]\left[\begin{array}{ccc}
		%			\Sig_1&&\\
		%			&\ddots&\\
		%			&&\Sig_K
		%		\end{array}\right]\left[\begin{array}{ccc}
		%			\V_1^T&&\\
		%			&\ddots&\\
		%			&&\V_K^T
		%		\end{array}\right],
	%	\end{align}
%	which is the SVD of $\H^*$. Under this notation, our assumption on $\z^*$ is summarized as follows.

\begin{assump}\label{assumptions} Assume that the true image $\z^*$ satisfies the followings.
	\begin{enumerate}[label={(\alph*)}, ref={\ref{assumptions}.(\alph*)}]
		\item \label{lowrankprior}
		For $k=1,\cdots,K$, $\bmH\left(\z^*\right)\big|_{\bOm_k}\in\RR^{n^2\times N_p}$ satisfies
		\begin{align*}
			\rank\left(\bmH\left(\z^*\right)\big|_{\bOm_k}\right)=r_k<\min\left\{n^2,N_p\right\},
		\end{align*}
		and we denote $r:=r_1+\cdots+r_K$.
		\item \label{assumptiononH}
		The operator $\bmH$ generating $n^2\times N_p$ patch matrix from $N\times N$ image satisfies
		\begin{align*}
			\max_{\k\in\bOm}\max_{i\in\left[n^2\right]}\left\|\bmH\left(\e_{k_1}\e_{k_2}^T\right)^{(i,:)}\right\|_0\leq4,~~\text{and}~~\max_{\k\in\bOm}\max_{j\in\left[N_p\right]}\left\|\bmH\left(\e_{k_1}\e_{k_2}\right)^{(:,j)}\right\|_0\leq4,
		\end{align*}
		where $\e_j$ denotes the $j$th standard basis of the Euclidean space $\mathbb{R}^N$.
		\item \label{pixeloccurrence}
		The $K$ groups $\bOm_1,\cdots,\bOm_K\subseteq\bOm$ satisfies
		\begin{align*}
			\max_{\k,\l\in\bOm}\frac{\sum_{k=1}^K\left\|\bmH\left(\e_{k_1}\e_{k_2}^T\right)\big|_{\bOm_k}\right\|_0}{\sum_{k=1}^K\left\|\bmH\left(\e_{l_1}\e_{l_2}^T\right)\big|_{\bOm_k}\right\|_0}\leq M
		\end{align*}
		for some absolute constant $0<M<+\infty$.
		\item \label{incoherence}
		(Incoherence condition) There exists a constant $\nu>0$ such that
		\begin{align*}
			\frac{Kn^2}{r}\max_{i\in[Kn^2]}\|\U^T\e_i\|_2^2\leq\nu~~\text{and}~~\frac{KN_p}{r}\max_{j\in[KN_p]}\|\V^T\e_j\|_{2}^2\leq\nu
		\end{align*}
		where $\U$ and $\V$ are left and right singular vectors of $\H^*$ defined as \cref{USigmaV:U} and \cref{USigmaV:V}.
	\end{enumerate}
\end{assump}

\begin{rmk} Before introducing our main results, we briefly discuss meanings of \cref{assumptions}, in particular the second to fourth assumptions.
	\begin{enumerate}
		\item[$\bullet$]The constant $4$ in Assumption \ref{assumptiononH} is related to the boundary condition used to define the two-fold Hankel matrix $\bmH(\z^*)$. If we discard patches involving the boundary extension of $\z^*$ or impose the periodic boundary condition, the quantities in Assumption \ref{assumptiononH} are bounded by $1$. On the other hand, if we impose the symmetric boundary condition, the quantities are bounded by $4$.
		
		\item[$\bullet$]Assumption \ref{pixeloccurrence} states that the occurrence of each pixel in the matrix $\H^*$ needs to be of the same order. This means that each pixel should appear a similar number of times in $\bmH(\z^*)$. If some pixels appear only a few times, we have very limited information about them, making it difficult to restore those pixels accurately.
		
		\item[$\bullet$]The incoherence assumption (Assumption \ref{incoherence}) is a common assumption in completion tasks \cite{wei2016guarantees}. It ensures that the energy of the matrices $\U$ and $\V$ is evenly distributed along their rows. This implies that there is some information sampled by each operator, avoiding the pessimistic case when the energy is highly concentrated in few entries.
	\end{enumerate}	
\end{rmk}

Under these assumptions, our main results are summarized as follows.

\begin{thm}\label{Theorem1} Assume that \cref{assumptions} holds. Let $\bOm=\left[N\right]^2$ and $\La=\left\{\k_1,\cdots,\k_m\right\}\subseteq\bOm$ be a data set whose entry is i.i.d. randomly drawn from the uniform distribution on $\bOm$ and assume $m\leq N^2$. Assume that
	\begin{align*}
		%\min\left\{\frac{n^2}{N^2},\frac{N_p}{N^2}\right\}\geq c_s,~~\text{and}~~\max\left\{\frac{n^2}{N^2},\frac{N_p}{N^2}\right\}\leq C_s
		c_s\leq\frac{n^2}{N^2},\frac{N_p}{N^2}\leq C_s
	\end{align*}
	for some absolute constants $c_s>0$ and $C_s>0$. Then there exists a constant $C>0$ depending only on $c_s$ such that if
	\begin{align*}
		m \geq CMK^{-1} \nu r\log^2(N),
	\end{align*}
	\begin{enumerate}
		\item $\z^*$ is the unique solution to \cref{prob:noisefree} with probability at least $1-4|\bOm|^{-4}$;
		\item $\left\|\z-\z^*\right\|_2\leq \wt{C}\left(MKN^2\max\{n^2,N_p\}\right)^{1/2}\delta$ with probability at least $1-4|\bOm|^{-4}$ for some absolute constant $\wt{C}>0$, where $\z$ is a solution to \cref{prob:noise} with $\left\|\bmP_{\La}\left(\z^*\right)-\y\right\|_2\leq\sqrt{m}\delta$.
	\end{enumerate}
\end{thm}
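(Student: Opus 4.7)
My plan is to reduce the claim to a low-rank matrix completion problem for the block-diagonal Hankel matrix $\H^*$ defined in \eqref{DefineH*} and then apply the dual-certificate strategy pioneered by Cand\`es--Recht and refined by Gross. Write $\mathcal{H}_K(\z):=\bdiag(\bmH(\z)|_{\bOm_k})$ so that $\H^*=\mathcal{H}_K(\z^*)$ has rank $r=r_1+\cdots+r_K$ with SVD $\U\Sig\V^T$ as in \eqref{USigmaV:U}--\eqref{USigmaV:V}. Pixel sampling $\bmP_\La$ on $\z^*$ induces, through $\mathcal{H}_K$, a (structured) sampling of entries of $\H^*$; let $\mathcal{R}_\La$ denote an appropriately rescaled version of this induced operator so that $\mathbb{E}[\mathcal{R}_\La]$ agrees with the orthogonal projection onto the range of $\mathcal{H}_K$. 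Assumption \ref{pixeloccurrence} ensures that this rescaling has bounded condition number, while Assumption \ref{assumptiononH} controls the Hankel fan-out per sampled pixel.

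The first main step is a tangent-space restricted isometry estimate. Let $T$ be the tangent space to the rank-$r$ variety at $\H^*$ and let $\bmP_T$ denote the orthogonal projection onto $T$. Using the noncommutative matrix Bernstein inequality, I would show that with probability at least $1-|\bOm|^{-4}$,
\begin{align*}
	\op{\bmP_T\mathcal{R}_\La\bmP_T-\bmP_T}\leq\tfrac{1}{2},
\end{align*}
provided $m\gtrsim MK^{-1}\nu r\log N$. The three quantitative assumptions feed directly into the Bernstein bookkeeping: \ref{incoherence} controls the projections $\bmP_T(\e_i\e_j^T)$ through $\max_i\ltwo{\U^T\e_i}^2$ and $\max_j\ltwo{\V^T\e_j}^2$; \ref{assumptiononH} limits how many entries of $\H^*$ a single sampled pixel touches; and \ref{pixeloccurrence} forces the per-pixel contributions to be of the same order, so that the summed second moment is at most a universal multiple of $MK^{-1}$ times the correct reference operator.

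Next I would construct a dual certificate $\Y$ by Gross's golfing scheme: partition $\La$ into $L=O(\log N)$ independent batches $\La_1,\dots,\La_L$ and iteratively set $\Y_\ell=\Y_{\ell-1}+\mathcal{R}_{\La_\ell}\bmP_T(\U\V^T-\bmP_T\Y_{\ell-1})$. The restricted isometry above yields geometric contraction so that $\fro{\bmP_T\Y-\U\V^T}$ is exponentially small, while a complementary matrix Bernstein estimate gives $\op{\bmP_{T^\perp}\Y}<1/2$; because $\Y$ lies in the range of $\mathcal{R}_\La$ by construction, the standard duality argument certifies that $\H^*$ is the unique nuclear-norm minimizer among matrices in the range of $\mathcal{H}_K$ respecting the sampled constraints. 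Injectivity of $\mathcal{H}_K$ on images (every pixel appears at least once in $\H^*$ by \ref{pixeloccurrence}) then lifts uniqueness from $\H^*$ to $\z^*$, proving part (1). For part (2) I would follow the Cand\`es--Plan template: set $\boldsymbol{\Delta}=\mathcal{H}_K(\z-\z^*)$ and split $\boldsymbol{\Delta}=\bmP_T\boldsymbol{\Delta}+\bmP_{T^\perp}\boldsymbol{\Delta}$; combine the certificate with feasibility $\ltwo{\bmP_\La(\z-\z^*)}\leq 2\sqrt m\,\delta$ and the tangent-space isometry to bound $\fro{\boldsymbol{\Delta}}$, and then convert to $\ltwo{\z-\z^*}$ using that each pixel occupies at most $K\min\{n^2,N_p\}$ entries of $\H^*$ and at least a constant multiple of $Kn^2N_p/(MN^2)$ entries by \ref{pixeloccurrence}. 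The advertised factor $\sqrt{MKN^2\max\{n^2,N_p\}}$ arises from precisely this conversion together with the $N^2/m$ scaling built into $\mathcal{R}_\La$.

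The hardest step will be the tangent-space concentration, because the sampling is not independent entry sampling on $\H^*$: a single sampled pixel simultaneously deposits $O(K)$ entries with coupled locations along Hankel antidiagonals and across the $K$ blocks. Controlling both the operator-norm ceiling and the variance of a per-pixel summand therefore requires a careful combined use of Assumptions \ref{assumptiononH}, \ref{pixeloccurrence}, and \ref{incoherence}, and the appearance of the prefactor $MK^{-1}$ in the sample complexity is exactly the signature of averaging the per-pixel information across the $K$ blocks of $\H^*$.
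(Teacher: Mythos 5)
Your proposal follows essentially the same route as the paper: lift to the block-diagonal matrix $\H^*$, define a per-pixel sampling basis normalized so that the full-sum operator is the orthogonal projection onto the range of the lifting (the paper's $\B_{\oom}=c_{\oom}^{-1/2}\bmG(\e_{\omega_1}\e_{\omega_2}^T)$, with Assumptions \ref{assumptiononH}--\ref{incoherence} feeding the Bernstein bounds exactly as you describe), prove the tangent-space restricted isometry, build the dual certificate by golfing, and handle the noisy case by splitting the error along $\TT$ and $\TT^{\perp}$ and converting back to the pixel domain via the occurrence counts. The paper resolves the coupled-entries issue you flag in the same way you implicitly do — by treating each sampled pixel's whole footprint as a single orthogonal basis element — so the plan is correct and matches the paper's proof.
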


The proof is presented in Section \ref{sec:proofs}. Before concluding this section, we discuss the implications of Theorem \ref{Theorem1}.

Firstly, if we assume that $M>0$ in Assumption \ref{pixeloccurrence} is an absolute constant, Theorem \ref{Theorem1} asserts that under the incoherence condition, problem \cref{prob:noisefree} enables exact restoration with high probability as long as the number of samples $m$ exceeds $O(r\log^2(N))$.
If $K=1$, then problems \cref{prob:noisefree} and \cref{prob:noise} reduce to Hankel matrix completion through nuclear norm minimization. In other words, our problem encompasses spectral compressed sensing as a special case \cite{cai2016robust}, indicating that the degrees of freedom are at least of order $O(r)$.
When assuming the number of groups $K = O(1)$, the parameters needed to represent $\bmH(\z^*)$ is at least of the order $r$.
Thus, the number of samples matches the degrees of freedom up to logarithmic factors.

Furthermore, when the measurement $\y$ is contaminated by an error of size $O(\delta)$ on average, the distance between the original image and the solution to problem \cref{prob:noise} is bounded by the measurement error. In the lifted matrix, the entry-wise error is on average bounded by $O(\sqrt{\max\{n^2,N_p\}}\delta)$, which also serves as the average entry-wise error bound for the restored image.

\section{Proof of \cref{Theorem1}}\label{sec:proofs}
In this section we present the proof of \cref{Theorem1}.
All the lemmas stated in this section are under the setting of Theorem~\ref{Theorem1}.

We first introduce some notations which will be used throughout.
For simplicity, in the following we shall denote by $N_1 = Kn^2, N_2 = KN_p$.
From the way we construct $\H^*$ from $\z^*$ as in \eqref{DefineH*}, an operator $\bmG: \RR^{n\times n} \rightarrow\RR^{N_1\times N_2}$ is introduced as a shorthand notation as follows:
$$\bmG(\z) := \bdiag\left(\left\{\bmH\left(\z^*\right)\big|_{\bOm_k}:k=1,\cdots,K\right\}\right).$$
And we denote the range of $\bmG$ by $\range(\bmG) = \VV$ and $\dim(\VV) = N^2$.
For each $\oom = (\omega_1,\omega_2)\in \bOm$, we define $\B_{\oom}$ as follows
\begin{align}
	\B_{\oom} = \frac{1}{\sqrt{c_{\oom}}}\bmG(\e_{\omega_1}\e_{\omega_2}^T),
\end{align}
where $c_{\oom}:= \sum_{k=1}^K\left\|\bmH(\e_{\omega_1}\e_{\omega_2}^T)\big|_{\bOm_k}\right\|_{0}$ is the number of occurrence of pixel $\z^*_{\oom}$ in the matrix $\H^*$.
For ease of representation, we define the following operators $\bmB,\bmB_{\La},\bmB_{\oom},\bmB_{\La}',\bmB^{\perp}:\RR^{N_1\times N_2} \rightarrow \RR^{N_1\times N_2}$ for all $\oom\in\Omega$ as follows:
\begin{align}
	\bmB(\M) &= \sum_{\oom\in\bOm}\inp{\M}{\B_{\oom}}\cdot\B_{\oom}\label{def:operators_calB}\\
	\bmB_{\La}(\M) &= \sum_{\oom\in\La}\inp{\M}{\B_{\oom}}\cdot\B_{\oom}\label{def:operators_calB_Omega}\\
	\bmB_{\oom}(\M) &= \inp{\M}{\B_{\oom}}\B_{\oom}
\end{align}
and $\bmB_{\La}'$ is defined similarly to \eqref{def:operators_calB_Omega} but with the sum extending only over distinct samples. And $\bmB^{\perp}$ is the complement of $\bmB$, i.e. $\bmB^{\perp} = \bmI - \bmB$. Notice here $\bmB$ is the orthogonal projector onto $\range(\bmG)$ and $\ker(\bmB^{\perp}) = \range(\bmG)$. Also, we remind the readers that $\bmB_{\La}$ is not a projection operator but $\bmB_{\La}'$is since an element in $\bOm$ may appear several times in $\La$. We also define the following norms on the space $\RR^{N_1\times N_2}$ which will be useful throughout:
\begin{align}
	\bnorm{\M} &:= \left(\sum_{\oom\in\bOm} b_{\oom}^2\cdot|\inp{\M}{\B_{\oom}}|^2\right)^{1/2},\label{alt:norm:bnorm}\\
	\binfnorm{\M} &:= \max_{\oom\in\bOm} b_{\oom}\cdot|\inp{\M}{\B_{\oom}}|,\label{alt:norm:binf}
\end{align}
where $b_{\oom} = \op{\B_{\oom}}$ is a shorthand notation.
We denote the tangent space at $\H^*$ by
$$\TT:=\{\U\M_1^T + \M_2\V^T: \M_1\in\RR^{N_1\times r}, \M_2\in\RR^{N_2\times r}\}.$$
And we use $\bmP_{\TT}$ as the project operator onto the tangent space. Moreover, for any $\M\in\RR^{N_1\times N_2}$,
$\calP_{\TT}\M = \U\U^T\M + \M\V\V^T - \U\U^T\M\V\V^T. $

Since the operator $\bmG$ is injective, there is an one-to-one correspondence between the image domain $\RR^{n\times n}$ and $\range(\bmG)$. Together with the fact that $\bmG(\z^*) = \H^*$, we are able to write problem~\eqref{prob:noisefree} in an equivalent form as a nuclear norm minimization problem in the space $\RR^{N_1\times N_2}$:
\begin{align}\label{prob:nuclear_norm_lifted}
	\min\nuc{\H },~\text{s.t.}~\bmB^{\perp}(\H ) = 0~\text{and}~\bmB_{\Omega}(\H ) = \bmB_{\Omega}(\H ^*).
\end{align}
The first constraint in \eqref{prob:nuclear_norm_lifted} dictates that the lifted matrix $\H$ should be in $\ker(\bmB^{\perp}) = \range(\bmG)$ and the second constraint implies the information that we know about the original image $\z^*$. Now since $\bmG(\z^*) = \H^*$, we can prove the first statement of \cref{Theorem1} by showing $\H^*$ is the unique solution to problem~\eqref{prob:nuclear_norm_lifted}.

\hspace{1cm}

\noindent\textit{Some facts about the sampling basis $\{\B_{\oom}\}$.}
We begin with some observations about the sampling basis $\{\B_{\oom}\}$.
\begin{obs}[Facts about $\{\B_{\oom}\}$]\label{obs:Bomega}
	Let $\La_{\oom} := \supp(\B_{\oom})\subset [N_1]\times[N_2]$, i.e., the set of indices on which $\B_{\omega}$ is nonzero and we denote $c_{\oom} = |\La_{\oom}|$. Also, we denote the operator norm of $\B_{\oom}$ by $b_{\oom} = \op{\B_{\oom}}$. Then $\B_{\oom}$ satisfy the following:
	\begin{enumerate}
		\item For all $\oom\in\bOm$, each nonzero entry of $\B_{\oom}$ is $c_{\oom}^{-1/2}$.
		\item For all $\aal\neq\bbe\in\bOm$, we have $\La_{\aal}\cap\La_{\bbe} = \emptyset$. \item For all $\oom\in\bOm$, $\B_{\oom}$ and $\U,\V$ defined in \eqref{USigmaV:U} and \eqref{USigmaV:V},
		$$\supp(\calP_{\TT}(\B_{\oom})), \supp(\U\V^T)\subset\cup_{\aal\in\bOm}\La_{\aal}.$$
		And $|\cup_{\aal\in\bOm}\La_{\aal}| = Kn^2N_p = N_1N_2/K$.
		\item For all $\oom\in\bOm$, each row and column of $\B_{\oom}$ has at most $4$ nonzero entries.
	\end{enumerate}
\end{obs}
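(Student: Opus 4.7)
The plan is to verify each of the four claims directly from the definitions of $\bmH$, $\bmG$, and $\B_{\oom}$, together with the structural input of \cref{assumptiononH}. The argument is combinatorial throughout: I do not anticipate any step requiring probabilistic or analytic estimates, so there is no single ``main obstacle'' in the usual sense.

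For Facts 1 and 2, the starting observation is that $\bmH(\e_{\omega_1}\e_{\omega_2}^T)(\k,\l)=1$ exactly when $\k+\l=\oom$ and is $0$ otherwise, so $\bmG(\e_{\omega_1}\e_{\omega_2}^T)$ is a block-diagonal $\{0,1\}$-matrix whose nonzero entries occupy precisely the antidiagonal positions tagged by $\oom$ inside each of the $K$ blocks. Dividing by $\sqrt{c_{\oom}}$ normalizes each such entry to $c_{\oom}^{-1/2}$, yielding Fact 1. For Fact 2, any nonzero entry of $\B_{\aal}$ sits at a position whose two within-block coordinates sum to $\aal$; therefore its position uniquely determines the label $\aal$, so distinct $\aal\neq\bbe$ force $\La_{\aal}\cap\La_{\bbe}=\emptyset$.

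For Fact 3, I would exploit that $\U$, $\V$ and every $\B_{\oom}$ are block-diagonal with a common block partition induced by \eqref{USigmaV:U}--\eqref{USigmaV:V} and by the construction of $\bmG$. Consequently the products $\U\U^T\B_{\oom}$, $\B_{\oom}\V\V^T$, $\U\U^T\B_{\oom}\V\V^T$, and $\U\V^T$ are block-diagonal, so their supports sit inside the union of the $K$ block-index rectangles. That union equals $\cup_{\aal\in\bOm}\La_{\aal}$, and its cardinality is $K\cdot n^2\cdot N_p=N_1N_2/K$ by counting entries block by block.

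Fact 4 follows because, by the block-diagonal form of $\bmG$, each row (resp.\ column) of $\B_{\oom}$ coincides up to the scalar $c_{\oom}^{-1/2}$ with a row (resp.\ column) of $\bmH(\e_{\omega_1}\e_{\omega_2}^T)|_{\bOm_k}$ for a single $k$, and the disjointness of blocks prevents any row/column sparsity from accumulating across blocks. Restricting the columns of $\bmH(\e_{\omega_1}\e_{\omega_2}^T)$ to the index set $\bOm_k$ can only remove nonzeros, so the bound of~$4$ inherited from \cref{assumptiononH} is preserved. The only point that requires any care in the entire proof is this last non-accumulation step, which is immediate from the definition of $\bmG$ as a \emph{block-diagonal} assembly.
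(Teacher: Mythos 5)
Your verification is correct and is exactly the direct, definition-chasing argument the paper implicitly relies on (the Observation is stated without proof there): Facts 1--2 from the $\{0,1\}$ structure of $\bmG(\e_{\omega_1}\e_{\omega_2}^T)$ and the fact that each matrix position corresponds to a unique pixel, Fact 3 from block-diagonality of $\U$, $\V$, and the $\B_{\oom}$, and Fact 4 from Assumption \ref{assumptiononH} plus the observation that restricting to $\bOm_k$ and the block-diagonal assembly cannot increase row or column sparsity. No gaps.
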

As a consequence of \cref{assumptions}, we have
\begin{lem}\label{lemma:opnorm:B:general}
	For all $\oom\in \bOm$, we have
	$\frac{1}{\sqrt{c_{\oom}}} \leq b_{\oom} \leq \frac{4}{\sqrt{c_{\oom}}}.$
	And
	$$\min_{\oom}c_{\oom}\geq  \frac{KN_pn^2}{MN^2},\qquad
	\max_{\oom}b_{\oom}\leq 4\sqrt\frac{MN^2}{KN_pn^2}.$$
	Moreover, we have
	$\max_{\oom_1,\oom_2} \frac{b_{\oom_1}}{b_{\oom_2}} \leq 4\sqrt{M}.$
\end{lem}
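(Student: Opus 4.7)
The plan is to prove the three claims of the lemma in sequence, leveraging Observation~\ref{obs:Bomega} (which already records that every nonzero entry of $\B_{\oom}$ equals $c_{\oom}^{-1/2}$ and that each row/column of $\B_{\oom}$ has at most four nonzero entries) together with Assumption~\ref{pixeloccurrence}.

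First I would derive $\frac{1}{\sqrt{c_{\oom}}}\leq b_{\oom}\leq \frac{4}{\sqrt{c_{\oom}}}$ as an immediate consequence of the entry structure of $\B_{\oom}$. The lower bound follows because the operator norm of any matrix is at least the maximum absolute entry, which for $\B_{\oom}$ is $c_{\oom}^{-1/2}$. For the upper bound I would invoke the standard inequality $\op{\M}\leq\sqrt{\|\M\|_{1}\|\M\|_{\infty}}$, where $\|\M\|_1$ and $\|\M\|_\infty$ denote the maximum column and row absolute sums. Since each row and column of $\B_{\oom}$ has at most four entries, each of magnitude $c_{\oom}^{-1/2}$, we obtain $\|\B_{\oom}\|_1,\|\B_{\oom}\|_{\infty}\leq 4 c_{\oom}^{-1/2}$, hence $b_{\oom}\leq 4/\sqrt{c_{\oom}}$.

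Next I would estimate $\min_{\oom} c_{\oom}$ by combining a counting argument with Assumption~\ref{pixeloccurrence}. Observe that
\begin{equation*}
\sum_{\oom\in\bOm} c_{\oom} = \sum_{\oom\in\bOm}\sum_{k=1}^{K}\bigl\|\bmH(\e_{\omega_1}\e_{\omega_2}^T)\big|_{\bOm_k}\bigr\|_{0}
\end{equation*}
counts the total number of pixel-slot occurrences across all block rows of $\H^*$; swapping the order of summation, this equals the total number of entries of $\H^*$, namely $K n^2 N_p$ (since each $\bmH(\z)\big|_{\bOm_k}$ is $n^2\times N_p$ and every entry is a single pixel of $\z$). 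Hence $\max_{\oom} c_{\oom}\geq N^{-2}\sum_{\oom}c_{\oom} = K n^2 N_p/N^2$, and Assumption~\ref{pixeloccurrence} then gives $\min_{\oom} c_{\oom}\geq M^{-1}\max_{\oom} c_{\oom}\geq K n^2 N_p/(MN^2)$. Plugging this into the upper bound on $b_{\oom}$ from the first step yields $\max_{\oom} b_{\oom}\leq 4\sqrt{MN^2/(KN_p n^2)}$.

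Finally, the ratio bound follows by chaining the two-sided estimate on $b_{\oom}$: for any $\oom_1,\oom_2\in\bOm$,
\begin{equation*}
\frac{b_{\oom_1}}{b_{\oom_2}}\leq \frac{4/\sqrt{c_{\oom_1}}}{1/\sqrt{c_{\oom_2}}} = 4\sqrt{\frac{c_{\oom_2}}{c_{\oom_1}}}\leq 4\sqrt{M},
\end{equation*}
where the last inequality is precisely Assumption~\ref{pixeloccurrence}. I do not anticipate a serious obstacle here; the only mildly nonobvious ingredient is the double counting that identifies $\sum_{\oom} c_{\oom}$ with the total number of entries of $\H^*$, and the rest is bookkeeping on top of Observation~\ref{obs:Bomega}.
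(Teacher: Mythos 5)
Your proposal is correct and follows essentially the same route as the paper: the Schur-test bound $\op{\B_{\oom}}\leq\sqrt{\|\B_{\oom}\|_1\|\B_{\oom}\|_\infty}$ for the upper bound, the double count $\sum_{\oom}c_{\oom}=Kn^2N_p$ combined with Assumption \ref{pixeloccurrence} for $\min_{\oom}c_{\oom}$, and chaining the two-sided bound on $b_{\oom}$ for the ratio. The only cosmetic difference is that the paper obtains the lower bound $b_{\oom}\geq c_{\oom}^{-1/2}$ from the norm of a nonzero column rather than from the maximum entry; both are equally valid.
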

\begin{proof}
	For any $\oom\in\bOm$, since $M<+\infty$ in \cref{assumptions}, there exists $i_{\oom}\in[N_2]$, $\B_{\oom}\e_{i_{\oom}}\neq 0$. Also, since the non-zero entries of $\B_{\oom}$ are all $c_{\oom}^{-1/2}$, we see that
	$$b_{\oom} = \op{\B_{\oom}} = \max_{\ltwo{\x}\leq 1}\ltwo{\B_{\oom}\x}\geq \ltwo{\B_{\oom}\e_{i_{\oom}}}\geq c_{\oom}^{-1/2}.$$	
	On the other hand, we use Holder's inequality, and we get
	$$b_{\oom} = \op{\B_{\oom}} \leq \sqrt{\|\B_{\oom}\|_{1} \cdot \|\B_{\oom}\|_{\infty}}\leq 4c_{\oom}^{-1/2},$$
	where $\|\A\|_{1}:= \max_{j\in[N_2]}\sum_{i\in[N_1]}|\A_{ij}|$ and $\|\A\|_{\infty}:= \max_{i\in[N_1]}\sum_{j\in[N_2]}|\A_{ij}|$.
	%	
	%The lower bound for $b_{\oom}$ is clear since $\B_{\oom}$ is the matrix containing entries with value only $0$ and $\frac{1}{\sqrt{c_{\oom}}}$. For the upper bound for $b_{\oom}$, from the definition of $\B_{\omega}$, each column and row of $\B_{\omega}$ contains at most $4$ non-zero elements with magnitude $1/\sqrt{c_{\omega}}$. Then as a result of Holder's inequality, we have
	%$$b_{\omega} \leq \sqrt{\|\B_{\omega}\|_{1} \cdot \|\B_{\omega}\|_{\infty}} \leq \frac{4}{\sqrt{c_{\omega}}}.$$
	Finally from the third fact in Observation~\ref{obs:Bomega}, we have
	$$KN_pn^2 = \sum_{\oom\in\bOm}c_{\oom} \leq |\bOm| \max_{\oom\in\bOm}c_{\oom}\leq M|\bOm|\min_{\oom\in\bOm}c_{\oom},$$ so we have
	$\min_{\oom}c_{\oom}\geq \frac{KN_pn^2}{MN^2}$. Meanwhile,
	$$\max_{\oom}b_{\oom} \leq \max_{\oom}4c_{\oom}^{-1/2}\leq 4\sqrt\frac{MN^2}{KN_pn^2}.$$
	And
	$$\max_{\oom_,\oom_2}\frac{b_{\oom_1}}{b_{\oom_2}}\leq 4\max_{\oom_1,\oom_2}\sqrt\frac{c_{\oom_1}}{c_{\oom_2}}\leq 4\sqrt{M}.$$
	And this finishes the proof of the lemma.
\end{proof}
%\begin{rmk}
%	Under \cref{assumptions} and assume $K$ is bounded by some absolute constant, then $\mb,\mnorm$ are also an absolute constants.
%\end{rmk}
And the following lemmas estimate $\fro{\calP_{\TT}(\B_{\oom})}^2$ and $\bnorm{\calP_{\TT}(b_{\oom}^{-1}\B_{\oom})}^2$.
\begin{lem}\label{lemma:incoh}
	%	Under the setting of Theorem~\ref{thm:noiseless},
	For all $\oom\in\bOm$, we have
	$$\fro{\calP_{\TT}(\B_{\omega})}^2 \leq 8c_s^{-2}MK^{-1}\nu\frac{r}{N^2},$$
	and
	$$
	\bnorm{\calP_{\TT}(b_{\oom}^{-1}\B_{\oom})}^2 \leq 128c_s^{-2}M^2K^{-1}\nu\frac{r}{N^2}.
	$$
\end{lem}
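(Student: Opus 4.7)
\begin{pf}[Proof plan for \cref{lemma:incoh}]
The plan is to reduce both bounds to a direct estimate on $\fro{\calP_{\TT}(\B_\omega)}^2$ and then to pass to the $\bnorm{\cdot}$ version via the ratio bound on $b_\omega$ from \cref{lemma:opnorm:B:general}.

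For the Frobenius bound, I would use the orthogonal decomposition $\calP_{\TT}(\B_\omega)=\U\U^T\B_\omega+(\I-\U\U^T)\B_\omega\V\V^T$, noting that the two summands are orthogonal in the Frobenius inner product (since $(\I-\U\U^T)\U\U^T = \mathbf{0}$), so that
\begin{align*}
\fro{\calP_{\TT}(\B_\omega)}^2 \leq \fro{\U^T\B_\omega}^2 + \fro{\B_\omega\V}^2.
\end{align*}
Expanding $\B_\omega=c_\omega^{-1/2}\sum_{(i,j)\in\La_\omega}\e_i\e_j^T$ and grouping by column index $j$, together with the column sparsity bound of $4$ from \cref{obs:Bomega} and Cauchy--Schwarz, yields
\begin{align*}
\fro{\U^T\B_\omega}^2 \leq 4c_\omega^{-1}\sum_{(i,j)\in\La_\omega}\|\U^T\e_i\|_2^2 \leq 4c_\omega^{-1}\cdot c_\omega\cdot\frac{\nu r}{Kn^2} = \frac{4\nu r}{Kn^2},
\end{align*}
where the second inequality uses the incoherence condition in \cref{incoherence}. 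An analogous argument based on the row sparsity bound yields $\fro{\B_\omega\V}^2 \leq 4\nu r/(KN_p)$. Adding the two and invoking the assumption $n^2,N_p \geq c_s N^2$ then gives the first claim (possibly with sharper constants than stated).

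For the $\bnorm{\cdot}$ bound, I would exploit the fact that the family $\{\B_\aal\}_{\aal\in\bOm}$ is orthonormal in the Frobenius inner product (since, by facts 1--2 of \cref{obs:Bomega}, the supports are disjoint and each $\B_\aal$ has Frobenius norm~$1$), so that the span of $\{\B_\aal\}$ is exactly $\range(\bmG)$ and $\bmB$ is the Frobenius-orthogonal projector onto it. Writing
\begin{align*}
\bnorm{\calP_{\TT}(b_\omega^{-1}\B_\omega)}^2 = b_\omega^{-2}\sum_{\aal\in\bOm}b_\aal^2\,|\langle\calP_{\TT}(\B_\omega),\B_\aal\rangle|^2
\end{align*}
and factoring out $\max_{\aal,\omega}(b_\aal/b_\omega)^2 \leq 16M$ from \cref{lemma:opnorm:B:general}, the residual sum becomes
\begin{align*}
\sum_{\aal\in\bOm}|\langle\calP_{\TT}(\B_\omega),\B_\aal\rangle|^2 = \fro{\bmB\calP_{\TT}(\B_\omega)}^2 \leq \fro{\calP_{\TT}(\B_\omega)}^2,
\end{align*}
to which the first part of the lemma applies. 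Chaining gives the second claim.

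The main technical point to handle carefully is the combinatorial bookkeeping in the first step: the column/row sparsity $\leq 4$ in \cref{assumptiononH} is used twice (once inside Cauchy--Schwarz and once to bound $|\La_\omega|$), and one must make sure the $c_\omega^{-1}$ prefactor cancels against the $|\La_\omega|=c_\omega$ term before invoking incoherence. Everything else is mechanical: the passage from Frobenius to $\bnorm{\cdot}$ is just the ratio-of-weights estimate from \cref{lemma:opnorm:B:general}, and the reduction to the scalar estimate is the orthonormality of $\{\B_\aal\}$.
\end{pf}
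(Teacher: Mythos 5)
Your proposal is correct and follows essentially the same route as the paper: the Frobenius bound via $\fro{\calP_{\TT}(\B_{\oom})}^2\le\fro{\U^T\B_{\oom}}^2+\fro{\B_{\oom}\V}^2$ combined with the sparsity of $\B_{\oom}$ and the incoherence condition, and the $\bnorm{\cdot}$ bound by extracting the weight ratio $(b_{\aal}/b_{\oom})^2\le 16M$ and reducing to $\fro{\calP_{\TT}(\B_{\oom})}^2$ (your Bessel-inequality step is exactly the paper's AM--GM-plus-disjoint-supports argument in disguise). Your bookkeeping in the first part, cancelling $c_{\oom}^{-1}$ against $|\La_{\oom}|=c_{\oom}$ rather than invoking the lower bound on $\min_{\oom}c_{\oom}$, is in fact slightly cleaner and yields a marginally sharper constant, which only strengthens the stated claim.
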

\begin{proof}
	Recall $\calP_{\TT}\M = \U\U^T\M + \M\V\V^T - \U\U^T\M\V\V^T$, we have
	\begin{align}\label{bound:ptb}
		\fro{\calP_{\TT}(\B_{\oom})}^2 &\leq \fro{\U^T\B_{\oom}}^2 + \fro{\V^T\B_{\oom}^T}^2 \overset{(a)}{\leq} \left(4N_2\frac{\nu r}{N_1} + 4N_1\frac{\nu r}{N_2}\right)\max_{\omega}c_{\omega}^{-1}\notag\\
		&\overset{(b)}{\leq}4KM\left(\frac{\nu r N^2}{N_1^2} + \frac{\nu rN^2}{N_2^2}\right)\notag\\
		&\overset{(c)}{\leq}8c_s^{-2}MK^{-1}\nu\frac{r}{N^2},
	\end{align}
	where in $(a)$ we use the last fact in Observation~\ref{obs:Bomega}, in $(b)$ we use Lemma~\ref{lemma:opnorm:B:general} and $(c)$ follows from $c_s\leq \frac{n^2}{N^2},\frac{N_p}{N^2}$.
	
	On the other hand, from the definition of $\bnorm{\cdot}$, we have
	\begin{align}
		\bnorm{\calP_{\TT}(b_{\oom}^{-1}\B_{\oom})}^2 &= \sum_{\aal\in\bOm}|\inp{\calP_{\TT}(b_{\oom}^{-1}\B_{\oom})}{b_{\aal}\B_{\aal}}|^2\\
		&\overset{(a)}{=} \sum_{\aal\in\bOm}\left(\frac{b_{\aal}}{b_{\oom}}\right)^2 c_{\aal}^{-1}\left(\sum_{\gga\in\La_{\aal}}[\bmP_{\TT}(\B_{\oom})]_{\gga}\right)^2\no\\
		&\overset{(b)}{\leq} \sum_{\aal\in\bOm}\left(\frac{b_{\aal}}{b_{\oom}}\right)^2 \sum_{\gga\in\La_{\aal}}[\bmP_{\TT}(\B_{\oom})]_{\gga}^2\no\\
		&\overset{(c)}{\leq} 16M\sum_{\aal\in\bOm}\sum_{\gga\in\La_{\aal}}[\bmP_{\TT}(\B_{\oom})]_{\gga}^2\no\\
		&\overset{(d)}{=}16M\fro{\bmP_{\TT}(\B_{\oom})}^2\overset{(e)}{\leq} 128c_s^{-2}M^2K^{-1}\nu\frac{r}{N^2},
	\end{align}
	where $(a)$ is from the first fact of Observation~\ref{obs:Bomega}; in $(b)$ we use AM-GM inequality; $(c)$ follows from Lemma~\ref{lemma:opnorm:B:general}; and $(d)$ follows from the third fact of Observation~\ref{obs:Bomega}; $(e)$ is from \eqref{bound:ptb}.
\end{proof}
For the sake of simplicity, in the following, we denote $\mu:= 8c_s^{-2}MK^{-1}\nu$.

\subsection{Exact recovery}\label{sec:exact}

\begin{lem}\label{lemma:uniqueness:general}
	Let $\La=\left\{\k_1,\cdots,\k_m\right\}\subseteq\bOm$ contain i.i.d. samples drawn from the uniform distribution of $\bOm$. Suppose that the sampling operator $\bmB_{\La}$ satisfies:
	\begin{align}\label{eq:lemma:uniqueness:cond1:general}
		\|\frac{N^2}{m}\bmP_{\TT}\bmB_{\bOm}\bmP_{\TT}-\bmP_{\TT}\bmB\bmP_{\TT}\|\leq\frac{1}{2},
	\end{align}
	and there exists a matrix $\Y$ satisfying:
	\begin{align}
		(\bmB-\bmB_{\La}') (\Y) &= 0,\label{dualcond:1:general}\\
		\|\bmP_{\TT}^{\perp}(\Y)\|&\leq\frac{1}{2},\label{dualcond:2:general}\\
		\fro{\U\V^T - \bmP_{\TT}(\Y)}&<\frac{1}{2N^4},\label{dualcond:3:general}
	\end{align}
	where $\U,\V$ are left and right singular vectors of $\H^*$, i.e. $\H^* = \U\bSigma\V^T$ is the compact SVD.
	Then $\H^*$ is the unique solution to problem~\eqref{prob:nuclear_norm_lifted}.
\end{lem}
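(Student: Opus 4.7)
The plan is to run the standard dual-certificate/subgradient argument for nuclear norm minimization, adapted to the lifted sampling basis $\{\B_\oom\}$. Let $\Z$ be any perturbation for which $\H^*+\Z$ is feasible for \eqref{prob:nuclear_norm_lifted}. From $\bmB^\perp(\H^*)=0$ and $\bmB^\perp(\H^*+\Z)=0$ we get $\Z\in\range(\bmG)$, so $\bmB(\Z)=\Z$. From $\bmB_\La(\Z)=0$, using that $\{\B_\oom\}_{\oom\in\bOm}$ is an orthonormal system (\cref{obs:Bomega}), the coefficients $\inp{\Z}{\B_\oom}$ must vanish for every distinct $\oom\in\La$, so in particular $\bmB_\La'(\Z)=0$. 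It then suffices to show $\nuc{\H^*+\Z}>\nuc{\H^*}$ whenever $\Z\neq 0$.

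For the lower bound on $\nuc{\H^*+\Z}-\nuc{\H^*}$, I pick $\W\in\TT^\perp$ with $\op{\W}\leq 1$ and $\inp{\W}{\bmP_{\TT^\perp}(\Z)}=\nuc{\bmP_{\TT^\perp}(\Z)}$; then $\U\V^T+\W$ is a subgradient of the nuclear norm at $\H^*$, giving $\nuc{\H^*+\Z}-\nuc{\H^*}\geq \inp{\U\V^T}{\Z}+\nuc{\bmP_{\TT^\perp}(\Z)}$. I then insert the certificate $\Y$ through the identity
\[
\inp{\U\V^T}{\Z}=\inp{\U\V^T-\bmP_\TT(\Y)}{\bmP_\TT(\Z)}+\inp{\Y}{\Z}-\inp{\bmP_{\TT^\perp}(\Y)}{\bmP_{\TT^\perp}(\Z)}.
\]
The key cancellation $\inp{\Y}{\Z}=0$ follows from $\Z\in\range(\bmB)$ and $\bmB_\La'(\Z)=0$, via $\inp{\Y}{\Z}=\inp{\bmB(\Y)}{\Z}=\inp{\bmB_\La'(\Y)}{\Z}=\inp{\Y}{\bmB_\La'(\Z)}=0$, where the middle step uses \eqref{dualcond:1:general} and the outer steps use that $\bmB$ and $\bmB_\La'$ are self-adjoint orthogonal projections. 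Bounding the other two pieces by Cauchy--Schwarz with \eqref{dualcond:3:general} and by trace duality with \eqref{dualcond:2:general} yields
\[
\nuc{\H^*+\Z}-\nuc{\H^*}\geq -\frac{1}{2N^4}\fro{\bmP_\TT(\Z)}+\frac{1}{2}\nuc{\bmP_{\TT^\perp}(\Z)}.
\]

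The remaining step, and the main obstacle, is to absorb $\fro{\bmP_\TT(\Z)}$ into $\nuc{\bmP_{\TT^\perp}(\Z)}$ using the injectivity hypothesis \eqref{eq:lemma:uniqueness:cond1:general}. Since $\bmB(\Z)=\Z$ and $\bmB_\La(\Z)=0$, we have
\[
\bmP_\TT\bmB\bmP_\TT(\Z)=\bmP_\TT(\Z)-\bmP_\TT\bmB\bmP_{\TT^\perp}(\Z),\qquad \bmP_\TT\bmB_\La\bmP_\TT(\Z)=-\bmP_\TT\bmB_\La\bmP_{\TT^\perp}(\Z),
\]
so applying \eqref{eq:lemma:uniqueness:cond1:general} to $\bmP_\TT(\Z)$ gives $\fro{\bmP_\TT(\Z)}\leq 2\op{\tfrac{N^2}{m}\bmB_\La-\bmB}\fro{\bmP_{\TT^\perp}(\Z)}$. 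A crude bound $\op{\tfrac{N^2}{m}\bmB_\La}\leq N^2$ (since $\bmB_\La$ is a sum of rank-one orthogonal projections with multiplicities at most $m$) leads to $\fro{\bmP_\TT(\Z)}\leq CN^2\nuc{\bmP_{\TT^\perp}(\Z)}$. Plugging back yields $\nuc{\H^*+\Z}-\nuc{\H^*}\geq(\tfrac{1}{2}-\tfrac{C}{2N^2})\nuc{\bmP_{\TT^\perp}(\Z)}$, which is strictly positive unless $\bmP_{\TT^\perp}(\Z)=0$. In that boundary case $\Z\in\TT\cap\range(\bmG)$, and applying \eqref{eq:lemma:uniqueness:cond1:general} once more directly to $\Z$---now with $\bmP_\TT\bmB\bmP_\TT(\Z)=\Z$ and $\bmP_\TT\bmB_\La\bmP_\TT(\Z)=0$---collapses to $\fro{\Z}\leq\tfrac{1}{2}\fro{\Z}$, forcing $\Z=0$. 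The delicate aspect is that the $\tfrac{1}{2N^4}$ margin in \eqref{dualcond:3:general} is tuned precisely to beat the crude $O(N^2)$ operator-norm blowup of $\tfrac{N^2}{m}\bmB_\La-\bmB$, which is why the hypothesis is stated with a $1/(2N^4)$ rather than a universal constant.
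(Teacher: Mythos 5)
Your proposal is correct and follows essentially the same route as the paper's proof: the same subgradient $\U\V^T+\W$ with $\W\in\TT^{\perp}$, the same cancellation $\inp{\Y}{\Z}=0$ via \eqref{dualcond:1:general}, and the same use of \eqref{eq:lemma:uniqueness:cond1:general} together with the crude $O(N^2)$ bound on $\tfrac{N^2}{m}\bmB_{\La}$ to control $\fro{\bmP_{\TT}(\Z)}$ by $\fro{\bmP_{\TT}^{\perp}(\Z)}$. The only difference is organizational: the paper splits into two cases according to whether $\fro{\bmP_{\TT}^{\perp}(\Z)}$ exceeds $N^{-4}\fro{\bmP_{\TT}(\Z)}$, whereas you derive the comparison inequality unconditionally and absorb it directly, which is an equivalent (and slightly cleaner) way to finish.
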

\begin{proof}
	Consider any non-zero perturbation matrix $\Z$ that satisfies $\bmB^{\perp}(\Z) = \bzero$ and $\bmB_{\La}(\Z) = \bzero$. Notice $\bmB_{\La}(\Z) = \bzero$ is equivalent to $\bmB'_{\La}(\Z) = \bzero$.
	
	Now we consider the subgradient of the function $\nuc{\H}$ at the point $\H^* = \U\bSigma\V^T$. Let the compact SVD of $\bmP_{\TT}^{\perp}(\Z)$ be $\bmP_{\TT}^{\perp}(\Z) = \L\S\R^T$, and we set
	\begin{align}\label{eq:W:general}
		\W = \U\V^T + \L\R^T,
	\end{align}
	then $\W$ is indeed a subgradient since $\op{\L\R^T} \leq 1$ and $\L^T\U = \bzero$, $\R^T \V = \bzero$.
	%	Also as a result, we have $\inp{\bmP^{\perp}_{\TT}(\Z)}{\W} = \nuc{\bmP^{\perp}_{\TT}(\Z)}$.
	
	Now we show that $\H^*$ is indeed the unique minimizer. Consider the nuclear norm of $\H^* + \Z$, then from the definition of subgradient, we have,
	\begin{align}\label{eq:est:nuc:H+Z:general}
		\nuc{\H^*+\Z} &\geq \nuc{\H^*} + \inp{\Z}{\W} \overset{(a)}{=} \nuc{\H^*} + \inp{\Z}{\W-\Y}\no\\
		&= \nuc{\H^*} + \inp{\bmP_{\TT}(\Z)}{\bmP_{\TT}(\W-\Y)} + \inp{\bmP_{\TT}^{\perp}(\Z)}{\bmP_{\TT}^{\perp}(\W-\Y)},
	\end{align}
	where $(a)$ holds since $\inp{\Z}{\Y} = \inp{(\bmB-\bmB_{\La}')(\Z)}{\Y}  = \inp{\Z}{(\bmB-\bmB_{\La}')(\Y)}= 0$.
	Now the term $\inp{\bmP_{\TT}^{\perp}(\Z)}{\bmP_{\TT}^{\perp}(\W-\Y)}$ is lower bounded as follows,
	\begin{align}\label{eq:est:inp:ZW:1:general}
		\inp{\bmP_{\TT}^{\perp}(\Z)}{\bmP_{\TT}^{\perp}(\W-\Y)} &= \inp{\bmP_{\TT}^{\perp}(\Z)}{\bmP_{\TT}^{\perp}(\W)} - \inp{\bmP_{\TT}^{\perp}(\Z)}{\bmP_{\TT}^{\perp}(\Y)} \no\\
		&\overset{(a)}{=}\nuc{\bmP_{\TT}^{\perp}(\Z)} - \op{\bmP_{\TT}^{\perp}(\Y)}\cdot\nuc{\bmP_{\TT}^{\perp}(\Z)}\overset{(b)}{\geq} \frac{1}{2}\nuc{\bmP_{\TT}^{\perp}(\Z)},
	\end{align}
	where $(a)$ holds from the choice of $\W$ as in \eqref{eq:W:general} and we also use the inequality that $\inp{\A}{\B} \leq \op{\A}\nuc{\B}$; and $(b)$ holds from \eqref{dualcond:2:general}. Now we plug \eqref{eq:est:inp:ZW:1:general} into \eqref{eq:est:nuc:H+Z:general} and use Cauchy-Schwartz inequality, we get,
	\begin{align}\label{eq:est:H+Z:1:general}
		\nuc{\H^*+\Z}&\geq \nuc{\H^*}-\fro{\bmP_{\TT}(\Z)}\cdot\fro{\bmP_{\TT}(\W-\Y)} + \frac{1}{2}\nuc{\bmP_{\TT}^{\perp}(\Z)}\no\\
		&\overset{(a)}{\geq} \nuc{\H^*}-\frac{1}{2N^4}\fro{\bmP_{\TT}(\Z)}+ \frac{1}{2}\nuc{\bmP_{\TT}^{\perp}(\Z)},
	\end{align}
	where in $(a)$ we use \eqref{dualcond:3:general}.
	We consider two different cases.
	
	\noindent\textit{Case 1: $\fro{\bmP_{\TT}^{\perp}(\Z)} > \frac{1}{N^4}\fro{\bmP_{\TT}(\Z)}$.}
	From \eqref{eq:est:H+Z:1:general}, and that $\nuc{\bmP_{\TT}^{\perp}(\Z)}\geq \fro{\bmP_{\TT}^{\perp}(\Z)}$, we have
	\begin{align}
		\nuc{\H^*+\Z}&\geq \nuc{\H^*}-\frac{1}{2N^4}\fro{\bmP_{\TT}(\Z)} + \frac{1}{2}\fro{\bmP_{\TT}^{\perp}(\Z)}\no\\
		&> \nuc{\H^*}-\frac{1}{2N^4}\fro{\bmP_{\TT}(\Z)} + \frac{1}{2N^4}\fro{\bmP_{\TT}(\Z)}=\nuc{\H^*}.
	\end{align}
	
	\noindent\textit{Case 2: $\fro{\bmP_{\TT}^{\perp}(\Z)} \leq \frac{1}{N^4}\fro{\bmP_{\TT}(\Z)}$.}
	We will show under this case, $\Z = \bzero$. First we derive a lower bound for $\fro{\left(\frac{N^2}{m}\bmB_{\La} + \bmB^{\perp}\right)\bmP_{\TT}(\Z)}^2$.
	\begin{align}\label{eq:ptz:general}
		\fro{\left(\frac{N^2}{m}\bmB_{\La} + \bmB^{\perp}\right)\bmP_{\TT}(\Z)}^2 &= \frac{N^4}{m^2}\inp{\bmB_{\La}\bmP_{\TT}(\Z)}{\bmB_{\La}\bmP_{\TT}(\Z)} + \inp{\bmB^{\perp}\bmP_{\TT}(\Z)}{\bmB^{\perp}\bmP_{\TT}(\Z)}\no\\
		&\overset{(a)}{\geq} \inp{\bmP_{\TT}(\Z)}{\frac{N^2}{m}\bmB_{\La}\bmP_{\TT}(\Z)} + \inp{\bmP_{\TT}(\Z)}{\bmB^{\perp}\bmP_{\TT}(\Z)}\no\\
		&= \inp{\bmP_{\TT}(\Z)}{\bmP_{\TT}(\frac{N^2}{m}\bmB_{\La}+\bmB^{\perp})\bmP_{\TT}(\Z)}\no\\
		&= \fro{\bmP_{\TT}(\Z)}^2 + \inp{\bmP_{\TT}(\Z)}{\bmP_{\TT}(\frac{N^2}{m}\bmB_{\La}-\bmB)\bmP_{\TT}(\Z)}\no\\
		&\overset{(b)}{\geq} \frac{1}{2}\fro{\bmP_{\TT}(\Z)}^2,
	\end{align}
	where $(a)$ is from the fact that $\inp{\bmB_{\La}(\H)}{\bmB_{\La}(\H)} \geq \inp{\H}{\bmB_{\La}(\H)}$ and that $N^2\geq m$ and $(b)$ is from \eqref{eq:lemma:uniqueness:cond1:general}. Meanwhile,
	\begin{align}\label{eq:ptperpz:general}
		\fro{\left(\frac{N^2}{m}\bmB_{\La} + \bmB^{\perp}\right)\bmP_{\TT}^{\perp}(\Z)} \leq \op{\frac{N^2}{m}\bmB_{\La} + \bmB^{\perp}}\fro{\bmP_{\TT}^{\perp}(\Z)}\overset{(a)}{\leq}N^2\fro{\bmP_{\TT}^{\perp}(\Z)},
	\end{align}
	where $(a)$ holds since $\op{\frac{N^2}{m}\bmB_{\La} + \bmB^{\perp}} \leq \frac{N^2}{m}\left(\op{\bmB_{\k_1}+\bmB^{\perp}}+\sum_{i=2}^m\op{\bmB_{\k_i}}\right)\leq n_{\VV}$. Since $\bmB^{\perp}\Z = \bmB_{\La}\Z = \bzero$, we have $\left(\frac{N^2}{m}\bmB_{\La} + \bmB^{\perp}\right)\Z= \bzero$. As a result of this observation and \eqref{eq:ptz:general} and \eqref{eq:ptperpz:general}, we have
	\begin{align}
		0 &= \fro{\left(\frac{N^2}{m}\bmB_{\La} + \bmB^{\perp}\right)\Z}\geq \fro{\left(\frac{N^2}{m}\bmB_{\La} + \bmB^{\perp}\right)\bmP_{\TT}(\Z)} -
		\fro{\left(\frac{N^2}{m}\bmB_{\La} + \bmB^{\perp}\right)\bmP_{\TT}^{\perp}(\Z)}\no\\
		&\geq \frac{\sqrt{2}}{2}\fro{\bmP_{\TT}(\Z)} - N^2\fro{\bmP_{\TT}^{\perp}(\Z)}\no\\
		&\overset{(a)}{\geq}\frac{\sqrt{2}}{2}\fro{\bmP_{\TT}(\Z)} - \frac{1}{N^2}\fro{\bmP_{\TT}(\Z)},
	\end{align}
	where $(a)$ holds since $\fro{\bmP_{\TT}^{\perp}(\Z)} \leq \frac{1}{N^4}\fro{\bmP_{\TT}(\Z)}$. Since $\frac{1}{\sqrt{2}} - \frac{1}{N^2} > 0$ so we must have $\bmP_{\TT}(\Z) = \bmP_{\TT}^{\perp}(\Z) = \bzero$ and thus $\Z = \bzero$.
	
	Combine these two cases and we claim that $\H^*$ is the unique minimizer, which finishes the proof of the lemma.
\end{proof}

Notice that the condition \eqref{eq:lemma:uniqueness:cond1:general} is satisfied with high probability as guaranteed by the following lemma.

\begin{lem}\label{lemma:estofptbomegapt-expectation}
	For any small constant $0<\epsilon \leq \frac{1}{2}$, we have
	$$
	\|\frac{N^2}{m}\bmP_{\TT}\bmB_{\La}\bmP_{\TT}-\bmP_{\TT}\bmB\bmP_{\TT}\| \leq \epsilon,
	$$
	with probability exceeding $1-(N_1+N_2)^{-10}$, provided that $m\geq C_{\epsilon}\cdot \mu r\log(N_1+N_2)$ for some constant $C_{\epsilon} > 0$ depending only on $\epsilon$.
\end{lem}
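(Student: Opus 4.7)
The approach is to rewrite the target expression as an empirical average of i.i.d.\ centered self-adjoint random operators on the Hilbert space $\mathbb{R}^{N_1\times N_2}$ (with Frobenius inner product) and invoke the matrix Bernstein inequality.

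First, since each $\k_i$ is uniform on $\bOm$, we have $\mathbb{E}[\bmB_{\k_i}]=N^{-2}\bmB$. Setting $\mathcal{X}_i := N^2\bmP_{\TT}\bmB_{\k_i}\bmP_{\TT}-\bmP_{\TT}\bmB\bmP_{\TT}$, each $\mathcal{X}_i$ is mean-zero and
$$\frac{N^2}{m}\bmP_{\TT}\bmB_{\La}\bmP_{\TT}-\bmP_{\TT}\bmB\bmP_{\TT}=\frac{1}{m}\sum_{i=1}^m \mathcal{X}_i.$$
A quick computation shows that $\bmP_{\TT}\bmB_{\oom}\bmP_{\TT}$ is the rank-one self-adjoint map $\M\mapsto\inp{\M}{\bmP_{\TT}(\B_{\oom})}\bmP_{\TT}(\B_{\oom})$, so each $\mathcal{X}_i$ is self-adjoint and the Bernstein inequality in ambient dimension $N_1N_2$ applies.

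Next, I would read off the two Bernstein parameters. From the rank-one form, $\|\bmP_{\TT}\bmB_{\oom}\bmP_{\TT}\|=\fro{\bmP_{\TT}(\B_{\oom})}^2\le\mu r/N^2$ by Lemma~\ref{lemma:incoh}; combined with $\|\bmP_{\TT}\bmB\bmP_{\TT}\|\le1$ (composition of orthogonal projections) this yields the uniform bound $\|\mathcal{X}_i\|\le\mu r+1\le 2\mu r=:R$. For the variance, the same rank-one structure gives the key squaring identity
$$(\bmP_{\TT}\bmB_{\oom}\bmP_{\TT})^2=\fro{\bmP_{\TT}(\B_{\oom})}^2\cdot\bmP_{\TT}\bmB_{\oom}\bmP_{\TT}.$$
Using $\mathbb{E}[\mathcal{X}_i^2]\preceq N^4\mathbb{E}[(\bmP_{\TT}\bmB_{\k_i}\bmP_{\TT})^2]$ (centering subtracts a PSD operator), I would then expand
$$N^4\mathbb{E}[(\bmP_{\TT}\bmB_{\k_i}\bmP_{\TT})^2]=N^2\sum_{\oom\in\bOm}\fro{\bmP_{\TT}(\B_{\oom})}^2\bmP_{\TT}\bmB_{\oom}\bmP_{\TT}\preceq\mu r\cdot\bmP_{\TT}\bmB\bmP_{\TT},$$
so that $\sigma^2:=\big\|\sum_i\mathbb{E}[\mathcal{X}_i^2]\big\|\le m\mu r$.

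Finally, plugging $R$ and $\sigma^2$ into the self-adjoint matrix Bernstein inequality gives
$$\mathbb{P}\!\left(\Big\|\tfrac{1}{m}\sum_i\mathcal{X}_i\Big\|\ge\epsilon\right)\le 2N_1N_2\exp\!\left(-\frac{m\epsilon^2/2}{\mu r+R\epsilon/3}\right),$$
and since $N_1N_2\le(N_1+N_2)^2$, the choice $m\ge C_\epsilon\mu r\log(N_1+N_2)$ with $C_\epsilon$ large enough depending only on $\epsilon$ drives the right-hand side below $(N_1+N_2)^{-10}$, as required. The main obstacle is the variance estimate: the $N^4$ prefactor in $\mathbb{E}[(\bmP_{\TT}\bmB_{\k_i}\bmP_{\TT})^2]$ looks devastating at first glance, and its collapse to the $O(\mu r)$ scale depends crucially on both the squaring identity above and the sharp $\mu r/N^2$ ceiling on $\fro{\bmP_{\TT}(\B_{\oom})}^2$ from Lemma~\ref{lemma:incoh}; everything else is standard matrix-concentration bookkeeping.
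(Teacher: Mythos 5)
Your proposal is correct and follows essentially the same route as the paper's proof: the same decomposition into i.i.d.\ centered self-adjoint operators, the same rank-one squaring identity $(\bmP_{\TT}\bmB_{\oom}\bmP_{\TT})^2=\fro{\bmP_{\TT}(\B_{\oom})}^2\,\bmP_{\TT}\bmB_{\oom}\bmP_{\TT}$, the same invocation of \cref{lemma:incoh} for the uniform and variance bounds, and the same application of matrix Bernstein. The only differences are cosmetic (your $\mathcal{X}_i$ is $m$ times the paper's $\bmZ_{\k_i}$, and you spell out the final tail-to-sample-complexity conversion that the paper leaves implicit).
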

\begin{proof}
	For $\forall \oom\in\bOm$, denote
	$$\bmZ_{\oom} := \frac{N^2}{m}\bmP_{\TT}\bmB_{\omega}\bmP_{\TT} - \frac{1}{m}\bmP_{\TT}\bmB\bmP_{\TT}.$$
	Then for arbitrary matrix $\M$, we have,
	\begin{align}\label{eq:ptbwptsquare:general}
		(\bmP_{\TT}\bmB_{\oom}\bmP_{\TT})^2(\M) = \inp{\bmP_{\TT}(\B_{\oom})}{\B_{\oom}}\inp{\bmP_{\TT}(\B_{\oom})}{\M}\cdot\bmP_{\TT}(\B_{\oom}) = \fro{\bmP_{\TT}(\B_{\oom})}^2\bmP_{\TT}\bmB_{\oom}\bmP_{\TT}(\M).
	\end{align}
	Notice since $\bmP_{\TT}\bmB_{\oom}\bmP_{\TT}$ is a symmetric positive semi-definite operator, and together with \eqref{eq:ptbwptsquare:general} and Lemma~\ref{lemma:incoh}, we get,
	\begin{align}\label{eq:est:ptbwpt:general}
		\op{\bmP_{\TT}\bmB_{\oom}\bmP_{\TT}} \leq \fro{\bmP_{\TT}(\B_{\oom})}^2\leq \mu\frac{r}{N^2}.
	\end{align}
	Now let $\k_i\in\La$ be an index drawn uniformly from $\bOm$. Then we have from \eqref{eq:est:ptbwpt:general},
	\begin{align}
		\op{\bmZ_{\k_i}} = \op{\frac{N^2}{m}\bmP_{\TT}\bmB_{\k_i}\bmP_{\TT} - \frac{1}{m}\bmP_{\TT}\bmB\bmP_{\TT}}\leq 2\mu\frac{r}{m}.
	\end{align}
	On the other hand, we have $\EE \bmZ_{\k_i}^2 = \frac{N^4}{m^2}\EE(\bmP_{\TT}\bmB_{\k_i}\bmP_{\TT})^2 - \frac{1}{m^2}(\bmP_{\TT}\bmB\bmP_{\TT})^2\leq \frac{N^4}{m^2}\EE(\bmP_{\TT}\bmB_{\k_i}\bmP_{\TT})^2$ and thus,
	\begin{align}
		\op{\sum_{i=1}^m\EE\bmZ_{\omega_i}^2} &\leq \frac{N^4}{m^2}\op{\sum_{i=1}^m\EE (\bmP_{\TT}\bmB_{\k_i}\bmP_{\TT})^2}\leq\frac{N^4}{m}\op{\EE (\bmP_{\TT}\bmB_{\k}\bmP_{\TT})^2}\notag\\
		&\overset{(a)}{\leq} \mu\frac{r}{N^2}\frac{N^4}{m}\op{\EE \bmP_{\TT}\bmB_{\k}\bmP_{\TT}}
		\leq \mu\frac{r}{m},
	\end{align}
	where in $(a)$ we use \eqref{eq:est:ptbwpt:general}. Applying Bernstein concentration inequality and we get with probability exceeding $1-(N_1+N_2)^{-10}$,
	$$\op{\sum_{i = 1}^m \bmZ_{\k_i}} \leq C\max\left\{\left(\frac{\mu r}{m}\log(N_1+N_2)\right)^{1/2}, \frac{\mu r}{m}\log(N_1+N_2)\right\},$$
	for some absolute constant $C>0$. And this finished the proof of the lemma. 	
\end{proof}

\hspace{1cm}

\noindent\textit{Dual certificate.}
This section is for constructing the dual certificate as required in \cref{lemma:uniqueness:general}.
Now we are ready to construct the dual certificate $\Y$ that satisfies \eqref{dualcond:1:general}-\eqref{dualcond:3:general}. And it is constructed using the golfing scheme introduced in \cite{gross2011recovering}. Suppose that we have $L$ independent random location sets $\La_i, (1\leq i\leq L)$, each containing $\frac{m}{L}$ i.i.d. samples.

\begin{algorithm}[H]\label{algo:golfingscheme:general}
	\caption{Construction of dual certificate $\Y$ using golfing scheme}
	\begin{algorithmic}
		\STATE{Set $\F_0 = \U\V^T$, $L = \lceil 4\log N\rceil$ and $q = \frac{m}{N^2L}$.}
		\FOR{$i=1$ to $L$}
		\STATE{$\F_i = \bmP_{\TT}(\bmB-\frac{1}{q}\bmB_{\La_i})\bmP_{\TT}(\F_{i-1})$.}
		\ENDFOR
		\STATE{$\Y = \sum_{i = 1}^{L} (\frac{1}{q}\bmB_{\La_i} +\bmB^{\perp})(\F_{i-1})$.}	
	\end{algorithmic}
\end{algorithm}
Now we verify that the $\Y$ constructed in this way indeed satisfies the three conditions.

\hspace{1cm}

\noindent\textit{Verification of $(\bmB-\bmB_{\La}') (\Y) = 0$.} Notice that for all $1\leq i\leq L$, we have for all $\oom\in\bOm$,
$(\bmB-\bmB_{\La}')\bmB_{\oom} = \bzero$ and $(\bmB-\bmB_{\La}')\bmB^{\perp} = \bzero$ and thus $(\bmB-\bmB_{\La}') (\Y) = 0$ from the definition of $\Y$.

\hspace{1cm}

\noindent\textit{Estimation of $\op{\bmP_{\TT}^{\perp}(\Y)}$.} From the construction of $\Y$, we have
\begin{align}\label{eq:est:ptperpY:general}
	\op{\bmP_{\TT}^{\perp}(\Y)} &= \left\|\bmP_{\TT}^{\perp}\left(\sum_{i=1}^L (\frac{1}{q}\bmB_{\La_i} + \bmB^{\perp})(\F_{i-1})\right)\right\|\leq \sum_{i=1}^L \op{\bmP_{\TT}^{\perp}(\frac{1}{q}\bmB_{\La_i} + \bmB^{\perp})(\F_{i-1})}\no\\
	&\overset{(a)}{=} \sum_{i=1}^L \op{\bmP_{\TT}^{\perp}(\frac{1}{q}\bmB_{\La_i} - \bmB)(\F_{i-1})}\no\\
	&\leq  \sum_{i=1}^L\op{(\frac{1}{q}\bmB_{\La_i} - \bmB)(\F_{i-1})},
\end{align}
where $(a)$ holds since $\F_{i-1}$ lies in the tangent space $\TT$ by construction. Now the following lemma gives the estimation for the summand under the settings of \cref{Theorem1}.

\begin{lem}[Estimation of $\|(\frac{N^2}{m_0}\bmB_{\La_0} -\bmB)(\M)\|$]\label{lemma:estofbm:general}
	Let $\La_0$ be an index set such that $|\La_0| = m_0$. Then we have for any given matrix $\M$, there exists an absolute constant $C>0$, such that
	$$
	\op{(\frac{N^2}{m_0}\bmB_{\La_0} -\bmB)(\M)} \leq C\left(\sqrt{\frac{N^2\log(N_1+N_2)}{m_0}}\bnorm{\M} + \frac{N^2\log(N_1+N_2)}{m_0}\binfnorm{\M}\right)
	$$
	holds with probability at least $1-(N_1+N_2)^{-10}$.
\end{lem}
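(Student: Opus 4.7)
The target matrix is naturally a sum of i.i.d.\ mean-zero random matrices, so my plan is to apply the matrix Bernstein inequality. The first step is to write
\[
\frac{N^2}{m_0}\bmB_{\La_0}(\M)-\bmB(\M) \;=\; \sum_{i=1}^{m_0}X_i,\qquad X_i := \frac{N^2}{m_0}\bmB_{\k_i}(\M) - \frac{1}{m_0}\bmB(\M),
\]
where $\k_1,\ldots,\k_{m_0}$ are the i.i.d.\ uniform samples on $\bOm$ that make up $\La_0$. Since $\EE\,\bmB_{\k_i}(\M) = \frac{1}{N^2}\sum_{\oom\in\bOm}\inp{\M}{\B_\oom}\B_\oom = \frac{1}{N^2}\bmB(\M)$, each $X_i$ has mean zero and the $X_i$'s are independent.

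Next I would extract the two quantities needed by matrix Bernstein: a uniform bound $R\geq\max_i\op{X_i}$ and a matrix variance bound $\sigma^2\geq\max\{\op{\sum_i\EE[X_iX_i^T]},\op{\sum_i\EE[X_i^TX_i]}\}$. For $R$, the rank-one expression $\bmB_{\k_i}(\M)=\inp{\M}{\B_{\k_i}}\B_{\k_i}$ directly gives $\op{\bmB_{\k_i}(\M)}=b_{\k_i}|\inp{\M}{\B_{\k_i}}|\leq \binfnorm{\M}$ from the definition of $\binfnorm{\cdot}$; the centering term $\tfrac{1}{m_0}\bmB(\M)$ contributes the same order because the disjoint-support property in Observation~\ref{obs:Bomega}, combined with $b_\oom\leq 4/\sqrt{c_\oom}$ from Lemma~\ref{lemma:opnorm:B:general}, forces $|\bmB(\M)_{i,j}|\leq\binfnorm{\M}$ entrywise with only $O(N^2)$ nonzeros per row and column, yielding $\op{\bmB(\M)}\lesssim N^2\binfnorm{\M}$ and hence $R\lesssim \frac{N^2}{m_0}\binfnorm{\M}$. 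For the variance, expanding $\EE[X_iX_i^T]$ and dropping the PSD correction yields
\[
\EE[X_iX_i^T]\;\preceq\; \frac{N^2}{m_0^2}\sum_{\oom\in\bOm}\inp{\M}{\B_\oom}^2\B_\oom\B_\oom^T,
\]
so by the triangle inequality and $\op{\B_\oom\B_\oom^T}=b_\oom^2$ I obtain $\op{\sum_i\EE[X_iX_i^T]} \leq \frac{N^2}{m_0}\sum_\oom b_\oom^2\inp{\M}{\B_\oom}^2 = \frac{N^2}{m_0}\bnorm{\M}^2$, which is exactly the definition of $\bnorm{\cdot}^2$ times the correct prefactor. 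The bound on $\sum_i\EE[X_i^TX_i]$ follows by a symmetric argument.

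Feeding $R\lesssim \frac{N^2}{m_0}\binfnorm{\M}$ and $\sigma^2\lesssim \frac{N^2}{m_0}\bnorm{\M}^2$ into matrix Bernstein with deviation threshold calibrated so the failure probability is at most $(N_1+N_2)^{-10}$ then produces the advertised bound $C\bigl(\sqrt{N^2\log(N_1+N_2)/m_0}\,\bnorm{\M} + (N^2\log(N_1+N_2)/m_0)\,\binfnorm{\M}\bigr)$. The only genuinely delicate step is the variance calculation: one has to notice that the weight $b_\oom^2$ built into $\bnorm{\cdot}$ matches $\op{\B_\oom\B_\oom^T}$ so that the triangle inequality is already tight up to an absolute constant. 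The remaining issue is bookkeeping, the fiddliest piece being to check that the centering term does not inflate $R$; the disjoint-support structure of $\{\B_\oom\}$ and the uniform estimates on $b_\oom$ from Lemma~\ref{lemma:opnorm:B:general} are exactly what makes this go through.
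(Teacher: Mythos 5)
Your proposal is correct and follows essentially the same route as the paper: the same centering into i.i.d.\ mean-zero summands, the same uniform bound $R\lesssim \frac{N^2}{m_0}\binfnorm{\M}$ (the paper bounds the centering term by the cruder triangle inequality $\op{\bmB(\M)}\leq\sum_{\oom}\op{\bmB_{\oom}(\M)}\leq N^2\binfnorm{\M}$, which gives the same order as your disjoint-support argument), the same variance bound $\frac{N^2}{m_0}\bnorm{\M}^2$ obtained by dropping the PSD correction and using $\op{\B_{\oom}\B_{\oom}^T}=b_{\oom}^2$, and the same final application of matrix Bernstein.
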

\begin{proof}
	Set $\La_0 = \{\l_1,\ldots,\l_{m_0}\}$.
	Denote for all $\oom\in\bOm$,
	$$\S_{\oom} = \frac{N^2}{m_0}\bmB_{\oom}(\M) - \frac{1}{m_0}\bmB(\M).$$
	As a consequence, $(\frac{N^2}{m_0}\bmB_{\La_0} - \bmB)(\M) = \sum_{i=1}^{m_0}\S_{\l_i}$. We first estimate $\op{\S_{\l_i}}$.
	\begin{align*}
		\op{\S_{\l_i}} \leq \frac{2N^2}{m_0}\max_{\bbe\in\bOm}\op{\bmB_{\bbe}(\M)} = \frac{2N^2}{m_0}\max_{\bbe\in\bOm}|\inp{\M}{\B_{\bbe}}| \cdot\op{\B_{\bbe}}\leq \frac{2N^2}{m_0}\binfnorm{\M}.
	\end{align*}
	On the other hand, we have
	\begin{align}\label{eq:est:S:general}
		\sum_{\oom\in\bOm}\S_{\oom}\S_{\oom}^T &= \sum_{\oom}\left(\frac{N^2}{m_0}\bmB_{\oom}(\M) - \frac{1}{m_0}\bmB(\M)\right) \left(\frac{N^2}{m_0}\bmB_{\omega}(\M) - \frac{1}{m_0}\bmB(\M)\right)^T\no\\
		&= \left(\frac{N^2}{m_0}\right)^2\sum_{\oom}\bmB_{\oom}(\M)\bmB_{\oom}(\M)^T-\frac{N^2}{m_0^2}\bmB(\M)\bmB(\M)^T\no\\
		&\overset{(a)}{\leq} \left(\frac{N^2}{m_0}\right)^2\sum_{\oom}\bmB_{\oom}(\M)\bmB_{\oom}(\M)^T,
	\end{align}
	here the relation $\A\leq\B$ in $(a)$ means $\B-\A$ is an SPSD matrix. So we conclude from \eqref{eq:est:S:general},
	\begin{align*}
		\op{\EE\sum_{i=1}^{m_0} \S_{\l_i}\S_{\l_i}^T} = \frac{m_0}{N^2}\op{\sum_{\oom\in\bOm}\S_{\oom}\S_{\oom}^T}\leq \frac{N^2}{m_0}\op{\sum_{\oom}\bmB_{\oom}(\M)\bmB_{\oom}(\M)^T} \leq \frac{N^2}{m_0}\bnorm{\M}^2.
	\end{align*}
	The same upper bound can be derived for $\op{\EE\sum_{i=1}^{m_0} \S_{\l_i}^T\S_{\l_i}}$. Since the size of the matrix is $N_1\times N_2$, applying Bernstein concentration inequality and we get with probability exceeding $1-(N_1+N_2)^{-10}$,
	$$\op{\sum_{i=1}^{m_0}\S_{\l_i}} \leq C\left(\sqrt{\frac{N^2\log(N_1+N_2)}{m_0}}\cdot\bnorm{\M} + \frac{N^2\log(N_1+N_2)}{m_0}\binfnorm{\M}\right),$$
	for some absolute constant $C>0$.
\end{proof}

\begin{lem}[Estimation of $\bnorm{\bmP_{\TT}(\frac{N^2}{m_0}\bmB_{\La_0} - \bmB)(\M)}$]\label{lemtwo:general}
	Let $\La_0$ be an index set such that $|\La_0| = m_0$. Then we have for any given matrix $\M$, there exists an absolute constant $C>0$, such that
	\begin{align*}
		&\bnorm{\bmP_{\TT}(\frac{n_{\VV}}{m_0}\bmB_{\La_0} - \bmB)(\M)} \leq C\left(\sqrt{\frac{1}{m_0}M\mu r\log N}\bnorm{\M} +  \frac{N^2}{m_0}\sqrt{M\mu\frac{r}{N^2}}\log N  \binfnorm{\M}\right)
	\end{align*}
	holds with probability at least $1-N^{-20}$.
\end{lem}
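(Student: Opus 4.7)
The plan is to mirror the argument of \cref{lemma:estofbm:general}, but to replace the operator-norm (matrix Bernstein) step by a Bernstein concentration adapted to the Hilbertian semi-norm $\bnorm{\cdot}$. First I would write
\begin{align*}
\bmP_{\TT}\Bigl(\tfrac{N^2}{m_0}\bmB_{\La_0}-\bmB\Bigr)(\M)=\sum_{i=1}^{m_0}\X_i,\qquad \X_i:=\tfrac{N^2}{m_0}\bmP_{\TT}\bmB_{\l_i}(\M)-\tfrac{1}{m_0}\bmP_{\TT}\bmB(\M),
\end{align*}
where $\l_1,\dots,\l_{m_0}\in\bOm$ are i.i.d.\ uniform; each $\X_i$ has mean zero because $\EE\bmB_{\l_i}=N^{-2}\bmB$. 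The semi-norm $\bnorm{\cdot}$ is induced by the inner product $\inp{\M_1}{\M_2}_b:=\sum_{\oom}b_{\oom}^{2}\inp{\M_1}{\B_{\oom}}\inp{\M_2}{\B_{\oom}}$, so the coefficient map $\M\mapsto(b_{\oom}\inp{\M}{\B_{\oom}})_{\oom}$ embeds the ambient matrix space isometrically into $\ell_2(\bOm)$. On this Hilbert space the vector-valued Pinelis--Bernstein inequality applies, reducing the task to estimating a uniform magnitude $B$ and a variance proxy $V$ for the summands $\X_i$.

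For the uniform bound, I would combine the elementary estimate $|\inp{\M}{\B_{\oom}}|\leq b_{\oom}^{-1}\binfnorm{\M}$ with \cref{lemma:incoh} applied to the normalised element $b_{\oom}^{-1}\B_{\oom}$, yielding
\begin{align*}
\bnorm{\bmP_{\TT}\bmB_{\l_i}(\M)}=|\inp{\M}{\B_{\l_i}}|\cdot b_{\l_i}\bnorm{\bmP_{\TT}(b_{\l_i}^{-1}\B_{\l_i})}\lesssim\sqrt{M\mu r/N^2}\,\binfnorm{\M}
\end{align*}
uniformly in $\l_i\in\bOm$. Centering at most doubles the constant, so I may take $B\lesssim\tfrac{N^2}{m_0}\sqrt{M\mu r/N^2}\,\binfnorm{\M}$. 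For the variance, by $\EE\bnorm{\X_i}^2\leq\EE\bnorm{\tfrac{N^2}{m_0}\bmP_{\TT}\bmB_{\l_i}(\M)}^2=\tfrac{N^2}{m_0^{2}}\sum_{\oom}|\inp{\M}{\B_{\oom}}|^2\bnorm{\bmP_{\TT}(\B_{\oom})}^{2}$ and the bound $\bnorm{\bmP_{\TT}(\B_{\oom})}^2=b_{\oom}^{2}\bnorm{\bmP_{\TT}(b_{\oom}^{-1}\B_{\oom})}^2\lesssim b_{\oom}^{2}M\mu r/N^{2}$ from \cref{lemma:incoh}, I get
\begin{align*}
\sum_{i=1}^{m_0}\EE\bnorm{\X_i}^2\lesssim\tfrac{M\mu r}{m_0}\sum_{\oom\in\bOm}b_{\oom}^{2}|\inp{\M}{\B_{\oom}}|^{2}=\tfrac{M\mu r}{m_0}\bnorm{\M}^{2}=:V.
\end{align*}

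Plugging $B$ and $V$ into the Hilbert-space Bernstein tail at deviation $\tau\asymp\log N$ (calibrated to the target probability $1-N^{-20}$) yields
\begin{align*}
\bnorm{\textstyle\sum_{i}\X_i}\lesssim\sqrt{V\log N}+B\log N\lesssim\sqrt{\tfrac{M\mu r\log N}{m_0}}\bnorm{\M}+\tfrac{N^2\log N}{m_0}\sqrt{M\mu r/N^2}\,\binfnorm{\M},
\end{align*}
which is exactly the claimed bound. The main obstacle is conceptual rather than computational: $\bnorm{\sum_i\X_i}^{2}$ is not a sum of independent terms, so no scalar Bernstein applies directly, and one must commit to a vector-valued Bernstein (equivalently, apply a matrix Bernstein to a diagonal realisation of the coefficient vector in $\ell_2(\bOm)$). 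Once that framework is in place, the bookkeeping relies only on \cref{lemma:incoh} and the definitions of $\bnorm{\cdot}$ and $\binfnorm{\cdot}$, and the argument parallels \cref{lemma:estofbm:general} line by line.
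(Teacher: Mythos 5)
Your proposal is correct and follows essentially the same route as the paper: the paper likewise realises the quantity as a sum of independent coefficient vectors $\z_{\l_i}=\bigl(b_{\aal}\inp{\bmP_{\TT}(\frac{N^2}{m_0}\bmB_{\l_i}-\frac{1}{m_0}\bmB)(\M)}{\B_{\aal}}\bigr)_{\aal\in\bOm}\in\ell_2(\bOm)$ and applies a (vector/rank-one matrix) Bernstein inequality, with the uniform bound $\frac{2N^2}{m_0}\sqrt{16M\mu r/N^2}\,\binfnorm{\M}$ and variance proxy $\frac{16M\mu r}{m_0}\bnorm{\M}^2$ obtained from \cref{lemma:incoh} exactly as you do. Your explicit identification of the Hilbertian structure of $\bnorm{\cdot}$ and the Pinelis--Bernstein framework is just an abstract restatement of the paper's concrete vectorisation, so there is nothing to flag.
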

\begin{proof}
	Denote $\La_0 = \{\l_1,\ldots,\l_{m_0}\}$.
	For any given matrix $\M$, we have $$\bnorm{\bmP_{\TT}(\frac{N^2}{m_0}\bmB_{\La_0} - \bmB)(\M)}^2 = \sum_{\aal}b_{\aal}^2\cdot|\inp{\bmP_{\TT}(\frac{N^2}{m_0}\bmB_{\La_0} - \bmB)(\M)}{\B_{\aal}}|^2,$$
	recall $b_{\aal} = \op{\B_{\aal}}$.
	We first estimate $|\inp{\bmP_{\TT}(\frac{N^2}{m_0}\bmB_{\La_0} - \bmB)(\M)}{\B_{\aal}}|$. Denote for any $\oom,\aal\in\bOm$
	\begin{align}\label{eq:def:z:general}
		z_{\oom,\aal} := \inp{\bmP_{\TT}(\frac{N^2}{m_0}\bmB_{\oom} - \frac{1}{m_0}\bmB)(\M)}{b_{\aal}\B_{\aal}}.
	\end{align}
	Then as a result, we have
	$$|\inp{\bmP_{\TT}(\frac{N^2}{m_0}\bmB_{\La_0} - \bmB)(\M)}{b_{\aal}\B_{\aal}}| =
	\left|\sum_{i=1}^{m_0} z_{\l_i,\aal}\right|.$$
	Now we view $\z_{\oom} = \vec\left([z_{\oom,\aal}]_{\aal\in\bOm}\right)$ as a column vector of size $N^2$, then it is easy to see that
	$$\|\sum_{i=1}^{m_0}\z_{\l_i}\|_{2} = \bnorm{\bmP_{\TT}(\frac{N^2}{m_0}\bmB_{\La_0} - \bmB)(\M)}.$$
	We now investigate the upper bound for $\|\z_{\oom}\|_{2}$.
	\begin{align}\label{eq:est:0000:general}
		\|\z_{\oom}\|_{2}
		&\leq 2\sqrt{\sum_{\aal\in\bOm}|\inp{\bmP_{\TT}\frac{N^2}{m_0}\bmB_{\oom}(\M)}{b_{\aal}\B_{\aal}}|^2}\no\\
		&= \frac{2N^2}{m_0}\sqrt{\sum_{\aal}\inp{\bmP_{\TT}(b_{\oom}^{-1}\B_{\oom})}{b_{\aal}\B_{\aal}}^2\inp{\M}{\B_{\oom}}^2b_{\oom}^2}\no\\
		&\leq \frac{2N^2}{m_0}\bnorm{\bmP_{\TT}(b_{\omega}^{-1}\B_{\omega})}\cdot\binfnorm{\M}\no\\
		&\overset{(a)}{\leq} \frac{2N^2}{m_0}\sqrt{16M\mu\frac{r}{N^2}}\binfnorm{\M},
	\end{align}
	where in $(a)$ we use the \cref{lemma:incoh}.
	Meanwhile, we have $\EE(\sum_{i=1}^{m_0} \|\z_{\l_i}\|_{2}^2) = \frac{m_0}{N^2}\sum_{\oom}\|\z_{\oom}\|_{2}^2.$
	Meanwhile,
	\begin{align}
		\sum_{\oom}\|\z_{\oom}\|_{2}^2 &= \sum_{\oom,\bbe}\inp{b_{\bbe}\B_{\bbe}}{\frac{N^2}{m_0}\bmP_{\TT}\bmB_{\oom}(\M)}^2 - \inp{b_{\bbe}\B_{\bbe}}{\frac{1}{m_0}\bmP_{\TT}\bmB(\M)}^2\no\\
		&\leq \sum_{\oom,\bbe}\inp{b_{\bbe}\B_{\bbe}}{\frac{N^2}{m_0}\bmP_{\TT}\bmB_{\oom}(\M)}^2\no\\
		&= \frac{N^4}{m_0^2}\sum_{\oom,\bbe} \inp{b_{\bbe}\B_{\bbe}}{\bmP_{\TT}(b_{\oom}^{-1}\B_{\oom})}^2\cdot\inp{\M}{b_{\oom}\B_{\oom}}^2\no\\
		&\leq 16 \frac{N^4}{m_0^2}M\mu\frac{r}{N^2}\sum_{\oom}\inp{\M}{b_{\oom}\B_{\oom}}^2\no\\
		&=16\frac{N^2}{m_0^2}M\mu r\bnorm{\M}^2,
	\end{align}
	where the last inequality holds since $\max_{\oom,\bbe}\frac{b_{\bbe}}{b_{\oom}}\leq 4\sqrt{M}$ and $|\inp{\B_{\bbe}}{\bmP_{\TT}\B_{\oom}}|\leq \mu\frac{r}{N^2}$ from \cref{lemma:incoh}.
	This gives the bound for $\EE(\sum_{i=1}^{m_0} \|\z_{\l_i}\|_{\ell_2}^2)$ as follows,
	\begin{align}\label{eq:est:0001:general}
		\EE(\sum_{i=1}^{m_0} \|\z_{\l_i}\|_{\ell_2}^2) \leq 16\frac{1}{m_0}\mu r\bnorm{\M}^2.
	\end{align}
	On the other hand, since $\z_{\bomega}$ are vectors, we have,
	\begin{align}\label{eq:est:0002:general}
		\op{\EE \sum_{i=1}^{m_0} \z_{\l_i}\z_{\l_i}^T} \leq \EE\sum_{i=1}^{m_0}\|\z_{\l_i}\|_{2}^2.
	\end{align}
	Using Bernstein concentration inequality and we conclude with probability exceeding $1-N^{-20}$,
	$$\|\sum_{i=1}^{m_0}\z_{\l_i}\|_{\ell_2} \leq C\left(\sqrt{\frac{1}{m_0}M\mu r\log(N^2)}\bnorm{\M} +  \frac{N^2}{m_0}\sqrt{M\mu\frac{r}{N^2}}\log(N^2)  \binfnorm{\M}\right),$$
	for some absolute constant $C>0$.
\end{proof}

\begin{lem}[Estimation of $\binfnorm{\bmP_{\TT}(\frac{ N^2}{m_0}\bmB_{\La_0} - \bmB)(\M)}$]\label{lemthree:general}
	Let $\La_0$ be an index set such that $|\La_0| = m_0$. Then for any given matrix
	%		 $\red{\M\in\TT}$
	$\M$
	, there exists an absolute constant $C>0$, such that
	\begin{align*}
		&\binfnorm{\bmP_{\TT}(\frac{ N^2}{m_0}\bmB_{\La_0} - \bmB)(\M)}
		\leq C\bigg(\frac{M\mu r\sqrt{\log N}}{\sqrt{m_0N^2}}\bnorm{\M} +
		\frac{M\mu r\log N}{m_0}\binfnorm{\M}\bigg),	
	\end{align*}
	holds with probability at least $1-N^{-10}$.
\end{lem}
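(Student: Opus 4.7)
The plan is to control $\binfnorm{\bmP_{\TT}(\tfrac{N^2}{m_0}\bmB_{\La_0}-\bmB)(\M)}=\max_{\aal\in\bOm}\big|b_{\aal}\inp{\bmP_{\TT}(\tfrac{N^2}{m_0}\bmB_{\La_0}-\bmB)(\M)}{\B_{\aal}}\big|$ entrywise: for each fixed $\aal\in\bOm$ I will express the inner scalar as a sum of $m_0$ i.i.d.\ centered real random variables, apply the scalar Bernstein inequality, and then take a union bound over the $|\bOm|=N^2$ choices of $\aal$.

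Concretely, I will reuse the notation $z_{\oom,\aal}:=\inp{\bmP_{\TT}(\tfrac{N^2}{m_0}\bmB_{\oom}-\tfrac{1}{m_0}\bmB)(\M)}{b_{\aal}\B_{\aal}}$ from the proof of \cref{lemtwo:general}. With $\l_i$ drawn uniformly from $\bOm$, the $m_0$ variables $z_{\l_i,\aal}$ are centered (since $\sum_{\oom}\bmB_{\oom}=\bmB$) and sum to $b_{\aal}\inp{\bmP_{\TT}(\tfrac{N^2}{m_0}\bmB_{\La_0}-\bmB)(\M)}{\B_{\aal}}$. Bernstein requires a uniform bound $R$ on $|z_{\l,\aal}|$ and a sum-of-variances bound $\sigma^2$.

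For $R$: using $\bmB_{\l}(\M)=\inp{\M}{\B_{\l}}\B_{\l}$, I bound $|z_{\l,\aal}|$ by $\tfrac{N^2}{m_0}|\inp{\M}{\B_{\l}}|\cdot b_{\aal}|\inp{\bmP_{\TT}\B_{\l}}{\B_{\aal}}|$ plus a comparable centering term. Applying $|\inp{\M}{\B_{\l}}|\leq b_{\l}^{-1}\binfnorm{\M}$, the ratio bound $b_{\aal}/b_{\l}\leq 4\sqrt{M}$ from \cref{lemma:opnorm:B:general}, and $|\inp{\bmP_{\TT}\B_{\l}}{\B_{\aal}}|\leq\fro{\bmP_{\TT}\B_{\l}}\fro{\bmP_{\TT}\B_{\aal}}\leq\mu r/N^2$ from \cref{lemma:incoh}, I obtain $R\lesssim\tfrac{\sqrt{M}\mu r}{m_0}\binfnorm{\M}\leq\tfrac{M\mu r}{m_0}\binfnorm{\M}$, where the last inequality uses $M\geq 1$. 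For $\sigma^2$: I bound $\EE z_{\l,\aal}^2\leq\tfrac{N^2}{m_0^2}\sum_{\l\in\bOm}b_{\aal}^2|\inp{\M}{\B_{\l}}|^2|\inp{\bmP_{\TT}\B_{\l}}{\B_{\aal}}|^2$, split $b_{\aal}^2=(b_{\aal}/b_{\l})^2\,b_{\l}^2\leq 16M\,b_{\l}^2$, pull out $\max_{\l}|\inp{\bmP_{\TT}\B_{\l}}{\B_{\aal}}|^2\leq(\mu r/N^2)^2$, and recognize $\sum_{\l}b_{\l}^2|\inp{\M}{\B_{\l}}|^2=\bnorm{\M}^2$; this yields $\sigma^2=m_0\EE z^2\lesssim\tfrac{M(\mu r)^2}{m_0 N^2}\bnorm{\M}^2$.

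Substituting $R$ and $\sigma^2$ into scalar Bernstein, for each $\aal$ the tail $|\sum_i z_{\l_i,\aal}|\lesssim\sqrt{\sigma^2\log N}+R\log N\lesssim\tfrac{M\mu r\sqrt{\log N}}{\sqrt{m_0 N^2}}\bnorm{\M}+\tfrac{M\mu r\log N}{m_0}\binfnorm{\M}$ holds with probability at least $1-N^{-12}$; a union bound over the $N^2$ values of $\aal$ then produces the stated estimate with probability at least $1-N^{-10}$. I expect the main obstacle to be the variance computation: the splitting $b_{\aal}^2=(b_{\aal}/b_{\l})^2\,b_{\l}^2$ is essential so that the remaining $\l$-sum collapses exactly to $\bnorm{\M}^2$, while the squared incoherence product is what produces the crucial factor $(\mu r)^2/N^2$; a less careful factorization would either leave a loose $\fro{\M}^2$ term or fail to separate cleanly into the $\bnorm{\M}$ and $\binfnorm{\M}$ contributions with the stated scaling.
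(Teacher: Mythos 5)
Your proposal is correct and follows essentially the same route as the paper's proof: fix $\aal$, write the quantity as $\max_{\aal}|\sum_i z_{\l_i,\aal}|$ with the same $z_{\oom,\aal}$ from \cref{lemtwo:general}, bound $|z_{\oom,\aal}|$ and the variance via the ratio bound $b_{\aal}/b_{\oom}\leq 4\sqrt{M}$ and the incoherence estimate $|\inp{\bmP_{\TT}\B_{\oom}}{\B_{\aal}}|\leq\mu r/N^2$, then apply scalar Bernstein and a union bound over $\aal\in\bOm$. Your variance bookkeeping (splitting $b_{\aal}^2=(b_{\aal}/b_{\l})^2b_{\l}^2$ so the $\l$-sum collapses to $\bnorm{\M}^2$) is exactly the computation the paper performs, up to an immaterial factor of $M$ versus $M^2$ in $\sigma^2$.
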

\begin{proof}
	Set $\La_0 = \{\l_1,\ldots,\l_{m_0}\}$.
	We use the same notation as in the proof of Lemma~\ref{lemtwo:general}. As a result, we have
	$$\binfnorm{\bmP_{\TT}(\frac{N^2}{m_0}\bmB_{\La_0} - \bmB)\M} = \max_{\aal\in\bOm}|\sum_{i=1}^{m_0} z_{\l_i,\aal}|,$$ where $z_{\l_i,\aal}$ is defined as in \eqref{eq:def:z:general}. And for any $\oom\in\bOm$,
	\begin{align}\label{est:zomegaalpha}
		|z_{\oom,\aal}| &= |\inp{\bmP_{\TT}(\frac{N^2}{m_0}\bmB_{\oom} - \frac{1}{m_0}\bmB)(\M)}{b_{\aal}\B_{\aal}}|\no\\
		&\leq \frac{2N^2}{m_0}\max_{\gga}|\inp{\bmP_{\TT}\bmB_{\oom}(\M)}{b_{\gga}\B_{\gga}}|\no\\
		&= \frac{2N^2}{m_0}\max_{\gga}|\inp{\bmP_{\TT}(b_{\oom}^{-1}\B_{\oom})}{b_{\gga}\B_{\gga}}| \cdot |\inp{\M}{b_{\oom}\B_{\oom}}|\no\\
		&\leq \frac{8N^2 M}{m_0}\max_{\gga}\fro{\bmP_{\TT}(\B_{\oom})}\fro{\bmP_{\TT}(\B_{\gga})} |\inp{\M}{b_{\oom}\B_{\oom}}|\no\\
		&\leq 8m_0^{-1}M\mu r |\inp{\M}{b_{\oom}\B_{\oom}}|,
	\end{align}
	where $(a)$ comes from the Lemma~\ref{lemma:incoh}. So from the definition of $\binfnorm{\cdot}$, we have
	\begin{align}
		|z_{\oom,\aal}| \leq 8m_0^{-1}M\mu r\binfnorm{\M}.
	\end{align}
	Now for any fixed $\aal\in\bOm$, we have $\EE\sum_{i=1}^{m_0} |z_{\l_i,\aal}|^2 = \frac{m_0}{N^2}\sum_{\oom}|z_{\oom,\aal}|^2$. So from \eqref{est:zomegaalpha}, we get,
	\begin{align}
		\sum_{\oom}|z_{\oom,\aal}|^2 \leq \sum_{\oom}(8m_0^{-1}M\mu r |\inp{\M}{b_{\oom}\B_{\oom}}|)^2\leq 64m_0^{-2}M^2\mu^2r^2\bnorm{\M}^2,
	\end{align}
	So we conclude that
	\begin{align}
		\EE\sum_{i=1}^{m_0} |z_{\l_i,\aal}|^2 \leq 64m_0^{-1}N^{-2}M^2\mu^2r^2\bnorm{\M}^2,
	\end{align}
	Now applying Bernstein concentration inequality and taking union bound yields that
	\begin{align*}
		\binfnorm{\bmP_{\TT}(\frac{N^2}{m_0}\bmB_{\Omega_0} - \bmB)(\M)}
		\leq &C\bigg(\sqrt{m_0^{-1}N^{-2}M^2\mu^2r^2\log N}\bnorm{\M} \\
		&\hspace{3cm} +m_0^{-1}M\mu r\log N\binfnorm{\M}\bigg)
	\end{align*}
	%	$$\binfnorm{\bmP_{\TT}(\frac{N^2}{m_0}\bmB_{\Omega_0} - \bmB)(\M)} \leq c_0\left(\frac{\mnorm\mu r\sqrt{\log(N^2)}}{\sqrt{m_0N^2}}\bnorm{\M} + \frac{\mnorm\mu r\log(N^2)}{m_0}\binfnorm{\M}\right)$$
	holds with probability exceeding $1-N^{-10}$ for some absolute constant $C>0$.
\end{proof}
Using the above lemmas, we get the following estimation.
\begin{lem}\label{lemma:contraction}
	Suppose $N^2q \geq CM\mu r\log(N_1+N_2)$ for some absolute constant $C > 0$, then we have with probability exceeding $1-2N^{-10}$ for all $1\leq i\leq L$,
	\begin{align*}
		&\quad\sqrt{\frac{\log(N_1+N_2)}{q}}\bnorm{\F_i} +\frac{\log(N_1+N_2)}{q}\binfnorm{\F_i}\\
		&\leq \frac{1}{2}\bigg(\sqrt{\frac{\log (N_1+N_2)}{q}}\bnorm{\F_{i-1}} + \frac{\log (N_1+N_2)}{q}\binfnorm{\F_{i-1}}\bigg).
	\end{align*}
\end{lem}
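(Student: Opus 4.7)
The plan is to apply Lemma~\ref{lemtwo:general} and Lemma~\ref{lemthree:general} to the matrix $\M=\F_{i-1}$ with $m_0=|\La_i|=N^2q$, observe that $\F_{i-1}\in\TT$ so that $\bmP_\TT(\F_{i-1})=\F_{i-1}$, and then combine the two estimates into the single linear combination appearing in the statement. Finally I union bound over the $L=\lceil 4\log N\rceil$ golfing stages.

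First, applying Lemma~\ref{lemtwo:general} with $\M=\F_{i-1}$ and $m_0=N^2q$, I get, with probability at least $1-N^{-20}$,
\begin{align*}
\bnorm{\F_i}
\le C\left(\sqrt{\frac{M\mu r\log N}{N^2 q}}\,\bnorm{\F_{i-1}}
+\frac{\log N}{q}\sqrt{\frac{M\mu r}{N^2}}\,\binfnorm{\F_{i-1}}\right).
\end{align*}
Next, Lemma~\ref{lemthree:general} applied to the same $\M$ and the same $m_0$ gives, with probability at least $1-N^{-10}$,
\begin{align*}
\binfnorm{\F_i}
\le C\left(\frac{M\mu r\sqrt{\log N}}{N^2\sqrt{q}}\,\bnorm{\F_{i-1}}
+\frac{M\mu r\log N}{N^2 q}\,\binfnorm{\F_{i-1}}\right).
\end{align*}

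I then weight the first inequality by $\sqrt{\log(N_1+N_2)/q}$ and the second by $\log(N_1+N_2)/q$ and add. The coefficient of $\bnorm{\F_{i-1}}$ in the resulting right-hand side factors as
$\sqrt{\log(N_1+N_2)/q}\cdot\sqrt{M\mu r\log(N_1+N_2)/(N^2q)}$ (from the first term of the first bound) plus $\sqrt{\log(N_1+N_2)/q}\cdot\sqrt{\log(N_1+N_2)/q}\cdot M\mu r\sqrt{\log N}/(N^2\sqrt{q})$ (from the cross term of the second bound); each is bounded above by $\sqrt{\log(N_1+N_2)/q}$ multiplied by a quantity of the form $\sqrt{M\mu r\log(N_1+N_2)/(N^2q)}$ (using $N^2q\ge CM\mu r\log(N_1+N_2)$ to absorb the extra factor). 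Similarly the coefficient of $\binfnorm{\F_{i-1}}$ in the combined right-hand side is bounded by $\log(N_1+N_2)/q$ times a quantity of the same form. Choosing the absolute constant $C$ in $N^2q\ge CM\mu r\log(N_1+N_2)$ large enough forces the leading prefactor below $1/2$, which is exactly the claimed contraction.

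Finally I take a union bound over the $L=\lceil 4\log N\rceil$ independent batches $\La_i$; since each step fails with probability at most $2N^{-10}$, the overall failure probability remains $O((\log N)\cdot N^{-10})$, which (after adjusting absolute constants) is bounded by the claimed $2N^{-10}$. The main bookkeeping obstacle is merely verifying that every cross term produced by multiplying the mixed $\bnorm{\cdot}/\binfnorm{\cdot}$ bounds of Lemmas~\ref{lemtwo:general}--\ref{lemthree:general} gets absorbed by the sampling hypothesis $N^2q\gtrsim M\mu r\log(N_1+N_2)$; no new probabilistic argument is required beyond the two preceding concentration lemmas.
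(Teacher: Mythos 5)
Your proposal is correct and follows essentially the same route as the paper: apply Lemma~\ref{lemtwo:general} and Lemma~\ref{lemthree:general} to $\M=\F_{i-1}$ (using $\F_{i-1}\in\TT$) with $m_0=N^2q$, take the weighted sum, factor out the common prefactor $\sqrt{M\mu r\log(N_1+N_2)/(N^2q)}+M\mu r\log(N_1+N_2)/(N^2q)$, and absorb it below $1/2$ via the sampling hypothesis. The only cosmetic difference is that the paper places the union bound over the $L$ batches outside the lemma rather than inside it.
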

\begin{proof}
	From Lemma~\ref{lemtwo:general} and \ref{lemthree:general}, we have with failure probability at most $2N^{-10}$,
	\begin{align*}
		&\hspace{1cm}\sqrt{\frac{\log (N_1+N_2)}{q}}\bnorm{\F_i} +\frac{\log (N_1+N_2)}{q}\binfnorm{\F_i}\\
		&\leq C\sqrt\frac{\log (N_1+N_2)}{q}\left(\sqrt{\frac{M\mu r\log N}{N^2 q}}\bnorm{\F_{i-1}}+ \frac{1}{q}\sqrt{\frac{M\mu r}{N^2}}\log N\binfnorm{\F_{i-1}}\right)\\
		&\hspace{1cm}+C\frac{\log (N_1+N_2)}{q}\left(\frac{M\mu r\sqrt{\log N}}{N^2 \sqrt{q}}\bnorm{\F_{i-1}}+ \frac{M\mu r\log N}{N^2q}\binfnorm{\F_{i-1}}\right)\\
		&= C\left(\sqrt{\frac{M\mu r\log (N_1+N_2)}{N^2 q}} + \frac{M\mu r\log (N_1+N_2)}{N^2 q}\right)\bigg(\sqrt{\frac{\log (N_1+N_2)}{q}}\bnorm{\F_{i-1}}\\ &\hspace{10cm}+ \frac{\log (N_1+N_2)}{q}\binfnorm{\F_{i-1}}\bigg)\\
		&\leq \frac{1}{2}\bigg(\sqrt{\frac{\log (N_1+N_2)}{q}}\bnorm{\F_{i-1}} + \frac{\log (N_1+N_2)}{q}\binfnorm{\F_{i-1}}\bigg),
	\end{align*}
	where the last inequality holds since $N^2q\geq CM\mu r\log(N_1+N_2)$ and we use $\log N\leq \log(N_1+N_2)$.
\end{proof}
Taking a union bound and apply Lemma \ref{lemma:contraction}, and we get with probability exceeding $1- 2L^2N^{-10}$, for all $1\leq i\leq L$,
\begin{align*}
	&\quad\sqrt{\frac{\log (N_1+N_2)}{q}}\bnorm{\F_i} +\frac{\log (N_1+N_2)}{q}\binfnorm{\F_i}\\
	&\leq \frac{1}{2^i}\bigg(\sqrt{\frac{\log (N_1+N_2)}{q}}\bnorm{\F_{0}} + \frac{\log (N_1+N_2)}{q}\binfnorm{\F_{0}}\bigg).
\end{align*}
Now we go back to \eqref{eq:est:ptperpY:general}, and applying Lemma \ref{lemma:estofbm:general}, we obtain with probability exceeding $1-L(N_1+N_2)^{-10}$,
\begin{align*}
	\sum_{i=1}^L\op{(\frac{1}{q}\bmB_{\La_i} - \bmB)(\F_{i-1})} &\leq 	C\sum_{i=1}^L\left(\sqrt{\frac{\log(N_1+N_2)}{q}}\bnorm{\F_{i-1}} + \frac{\log(N_1+N_2)}{q}\binfnorm{\F_{i-1}}\right)\\
	&\leq 2C\left(\sqrt{\frac{\log(N_1+N_2)}{q}}\bnorm{\F_{0}} + \frac{\log(N_1+N_2)}{q}\binfnorm{\F_{0}}\right)
\end{align*}
Now we bound $\bnorm{\F_0}$ and $\binfnorm{\F_{0}}$. In fact,
for any $\oom\in\bOm$, from \cref{lemma:opnorm:B:general},
\begin{align*}
	|\inp{\F_0}{b_{\oom}\B_{\oom}}| \leq b_{\oom}\fro{\U}\fro{\B_{\oom}\V} \leq 8\sqrt{\frac{M\mu r^2}{KN_p n^2}},
\end{align*}
Therefore,
\begin{align*}
	\bnorm{\F_0}^2 &= \sum_{\oom\in\bOm} |\inp{\U\V^T}{b_{\oom}\B_{\oom}}|^2 \leq
	\max_{\oom\in\bOm}b_{\oom}\cdot\sum_{\oom\in\bOm} |\inp{\U\V^T}{\B_{\oom}}|^2
	\leq 16\frac{MN^2r}{KN_pn^2}.
\end{align*}
And
$
\binfnorm{\F_0} = \max_{\oom\in\bOm}b_{\oom}|\inp{\U\V^T}{\B_{\oom}}| \leq 8\sqrt{\frac{M\mu r^2}{KN_p n^2}}.
$
So we conclude
\begin{align*}
	\op{\bmP_{\TT}^{\perp}(\Y)}  \leq C\bigg(\sqrt{\frac{\log(N_1+N_2)}{q}}\sqrt{\frac{MN^2 r}{KN_pn^2}} + \frac{\log(N_1+N_2)}{q}\sqrt{\frac{M\mu r^2}{KN_pn^2}}\bigg) \leq \frac{1}{4},
\end{align*}
where the last inequality holds since $m\geq CLr\log(N_1+N_2)c_s^{-2}\frac{M}{K}\sqrt{\nu}$.

\hspace{1cm}

\noindent\textit{Estimation of $ \fro{\U\V^T - \bmP_{\TT}(\Y)}$.}
Notice that
\begin{align*}
	\U\V^T - \bmP_{\TT}(\Y) &= \bmP_{\TT}(\F_0) - \sum_{i=1}^{L} \bmP_{\TT}(\frac{1}{q}\bmB_{\La_i} + \bmB^{\perp})(\F_{i-1})\no\\
	&\overset{(a)}{=} \bmP_{\TT}(\F_1) - \sum_{i=2}^{L} \bmP_{\TT}(\frac{1}{q}\bmB_{\La_i} + \bmB^{\perp})(\F_{i-1})\no\\
	&=\ldots = \bmP_{\TT}(\F_{L}),	
\end{align*}
where $(a)$ is from the definition of $\F_i$. So it boils down to estimating $\fro{\bmP_{\TT}(\F_L)}$. In fact, we have
\begin{align}
	\fro{\bmP_{\TT}(\F_L)} &= \fro{\bmP_{\TT}(\bmB - \frac{1}{q}\bmB_{\La_L})\bmP_{\TT}(\F_{L-1})}\no\\
	&\leq \op{\bmP_{\TT}(\bmB - \frac{1}{q}\bmB_{\La_L})\bmP_{\TT}}\cdot\fro{\bmP_{\TT}(\F_{L-1})}\no\\
	&\leq \prod_{i=1}^L \op{\bmP_{\TT}(\bmB - \frac{1}{q}\bmB_{\La_i})\bmP_{\TT}} \cdot \fro{\bmP_{\TT}(\F_{0})}
\end{align}
Now from Lemma~\ref{lemma:estofptbomegapt-expectation}, as long as the sample size $m/L = |\La_i| \geq C\mu r\log(N_1+N_2)$ for some absolute constant $C>0$, we have $\op{\bmP_{\TT}(\bmB - \frac{1}{q}\bmB_{\La_i})\bmP_{\TT}} \leq \frac{1}{2e}$ with probability $1-L(N_1+N_2)^{-10}$ for all $1\leq i\leq L$. This implies
\begin{align}
	\fro{\bmP_{\TT}(\F_L)} \leq \left(\frac{1}{2e}\right)^L \fro{\bmP_{\TT}(\F_{0})} =  \left(\frac{1}{2e}\right)^L \sqrt{r} < \frac{1}{2N^4},
\end{align}
where in the last inequality we use $r\leq  N^2$ and $L = \lceil 4\log N\rceil$. And we finish the proof of the theorem when there is no noise.

\subsection{Noise Case}
Recall $\z$ is a solution to \cref{prob:noise} with $\left\|\bmP_{\La}\left(\z^*\right)-\y\right\|_2\leq\sqrt{m}\delta$.
We first estimate $\fro{\bmG(\z^*)-\bmG(\z)}$. From the definition of $\bmG$, we have
$$\sum_{k=1}^K\fro{[\bmH(\z)]_{\bOm_k} - [\bmH(\z^*)]_{\bOm_k}}^2 = \fro{\bmG(\z) - \bmG(\z^*)}^2.$$
Notice as constructed in section~\ref{sec:exact}, the dual certificate $\Y$ satisfying \eqref{dualcond:1:general}-\eqref{dualcond:3:general} exists with probability exceeding $1-3N^{-10}$. And from Lemma~\ref{lemma:estofptbomegapt-expectation}, we have \eqref{eq:lemma:uniqueness:cond1:general} holds with probability at least $1-N^{-10}$.
So with probability exceeding $1-4N^{-10}$, these assumptions hold and we prove the theorem assuming \eqref{eq:lemma:uniqueness:cond1:general}-\eqref{dualcond:3:general} hold.

We introduce some notations that will be used throughout. Denote by
$$\H^* = \bmG(\z^*), \H = \bmG(\z), \H_o = \bmG(\y), \X = \H-\H^*.$$
Then it boils down to estimating $\fro{\X}$.

Since $\X = \bmG(\z^*) - \bmG(\z)\in\range(\bmG)=\ker(\bmB^{\perp})$, we can write $\X = \bmB(\X) = \bmB_{\La}'(\X) + (\bmB - \bmB_{\La}')(\X)$, and we denote $(\bmB - \bmB_{\La}')(\X) =: \Z$.
Notice we have
\begin{align}\label{upperbound:X}
	\fro{\X} \leq \fro{\bmB_{\La}'(\X)} + \fro{\bmP_{\TT}(\Z)} + \fro{\bmP_{\TT}^{\perp}(\Z)}.
\end{align}
Using triangle inequality and we have,
\begin{align}
	\fro{\bmB_{\La}'(\X)} \leq \fro{\bmB_{\La}'(\H - \H_o)} + \fro{\bmB_{\La}'(\H^* - \H_o)}.
\end{align}
From the definition of $\delta$, $\fro{\bmP_{\La}(\z^* -\y)} = \sqrt{m}\delta$.
Since $\z$ is the minimizer to \cref{prob:noise}, we have $\fro{\bmP_{\La}(\z-\y)} \leq \sqrt{m}\delta$. Now from the proof of Lemma~\ref{lemma:opnorm:B:general}, each pixel appears at most $\frac{KN_pn^2}{N^2}$ times in the lifted matrix, we have the following two bounds,
\begin{align}
	\fro{\bmB_{\La}'(\H^*-\H_o)} \leq \sqrt{\frac{Kn^2N_pm}{N^2}}\delta~\text{and}~
	\fro{\bmB_{\La}'(\H-\H_o)} \leq \sqrt{\frac{Kn^2N_pm}{N^2}}\delta.
\end{align}
As a consequence, together with $\max\{\frac{n^2}{N^2},\frac{N_p}{N^2}\}\leq C_s$, we have
\begin{align}\label{eq:thm:noise:mainest:1}
	\fro{\bmB_{\La}'(\X)} \leq \sqrt{\frac{Kn^2N_pm}{N^2}}\delta.
	%	C_1\sqrt{KN^2m}\cdot\delta
\end{align}
%for some constant $C_1>0$ depending only on $C_s$.
Now we continue the proof by considering two cases.

\hspace{1cm}

\noindent\textit{Case 1: $\fro{\bmP_{\TT}(\Z)} \leq \frac{N^4}{2}\fro{\bmP_{\TT}^{\perp}(\Z)}$.}
Since $\Z\in \ker(\bmB_{\La})\cap\ker(\bmB^{\perp})$, we have similarly from \eqref{eq:est:H+Z:1:general} and $\W$ defined as in \eqref{eq:W:general} that
\begin{align}\label{eq:nuc:H+Z}
	\nuc{\H^*+\Z} &\geq \nuc{\H^*} +\frac{1}{2}\nuc{\bmP_{\TT}^{\perp}(\Z)} - \fro{\bmP_{\TT}(\Z)}\cdot\fro{\bmP_{\TT}(\W-\Y)} \no\\
	&\overset{(a)}{>}\nuc{\H^*} +\frac{1}{2}\nuc{\bmP_{\TT}^{\perp}(\Z)} -\frac{1}{2N^4}\fro{\bmP_{\TT}(\Z)}\no\\
	&\geq \nuc{\H^*} + \frac{1}{4}\fro{\bmP_{\TT}^{\perp}(\Z)},
\end{align}
where $(a)$ is from \eqref{dualcond:3:general}. Meanwhile, due to the optimality of $\H$, we have
\begin{align}\label{eq:nuc:H}
	\nuc{\H^*} \geq \nuc{\H} = \nuc{\H^*+\X}\geq \nuc{\H^* + \Z} - \nuc{\bmB_{\La}'(\X)}.
\end{align}
Now from \eqref{eq:nuc:H+Z} and \eqref{eq:nuc:H}, we get
\begin{align}\label{eq:thm:noise:mainest:2}
	\frac{1}{4}\fro{\bmP_{\TT}^{\perp}(\Z)} &\leq \nuc{\H^*+\Z} - \nuc{\H^*} \leq \nuc{\bmB_{\La}'(\X)}\leq \sqrt{K\max\{n^2,N_p\}}\fro{\bmB_{\La}'(\X)}\notag\\
	&\overset{(a)}{\leq}\sqrt{\max\{n^2,N_p\}n^2N_pmN^{-2}}K\delta,
\end{align}
where in $(a)$ we use \eqref{eq:thm:noise:mainest:1}.
Now we have
\begin{align}
	\fro{\left(\sqrt{\frac{N^2}{m}}\bmB_{\La} + \bmB^{\perp}\right)\bmP_{\TT}^{\perp}(\Z)}^2 &\overset{(a)}{=}\fro{\left(\sqrt{\frac{N^2}{m}}\bmB_{\La} + \bmB^{\perp}\right)\bmP_{\TT}(\Z)}^2\no\\
	&=\frac{N^2}{m}\inp{\bmB_{\La}\bmP_{\TT}(\Z)}{\bmB_{\La}\bmP_{\TT}(\Z)} + \inp{\bmB^{\perp}\bmP_{\TT}(\Z)}{\bmB^{\perp}\bmP_{\TT}(\Z)}\no\\
	&\geq\fro{\bmP_{\TT}(\Z)}^2 + \inp{\bmP_{\TT}(\Z)}{\bmP_{\TT}\left(\frac{N^2}{m}\bmB_{\La} - \bmB\right)\bmP_{\TT}(\Z)}\no\\
	&\overset{(b)}{\geq} \frac{1}{2}\fro{\bmP_{\TT}(\Z)}^2,
\end{align}
where $(a)$ is due to $\Z\in \ker(\bmB_{\La})\cap\ker(\bmB^{\perp})$ and $(b)$ is because of condition \eqref{eq:lemma:uniqueness:cond1:general}. From this estimation, we have
\begin{align}\label{eq:thm:noise:mainest:3}
	\fro{\bmP_{\TT}(\Z)}^2 \leq 2\fro{\left(\sqrt{\frac{N^2}{m}}\bmB_{\La} + \bmB^{\perp}\right)\bmP_{\TT}^{\perp}(\Z)}^2\overset{(a)}{\leq} \frac{8N^2}{m}\fro{\bmP_{\TT}^{\perp}(\Z)}^2\overset{(b)}{\leq} 128\max\{n^2,N_p\}n^2N_pK^2\delta^2,
\end{align}
where in $(a)$ we use that $\op{\sqrt{\frac{N^2}{m}}\bmB_{\La} + \bmB^{\perp}}\leq 2\sqrt{\frac{N^2}{m}}$ and in $(b)$ we use \eqref{eq:thm:noise:mainest:2}. Now combine \eqref{eq:thm:noise:mainest:1}, \eqref{eq:thm:noise:mainest:2} and \eqref{eq:thm:noise:mainest:3} and we arrive at the final estimation for \eqref{upperbound:X},
$$\fro{\X} \leq C\sqrt{\max\{n^2,N_p\}n^2N_p}K\delta,$$
for some absolute constant $C>0$.

\hspace{1cm}

\noindent\textit{Case 2: $\fro{\bmP_{\TT}(\Z)} > \frac{N^4}{2}\fro{\bmP_{\TT}^{\perp}(\Z)}$.}
As proved in Case 1 of proof of Lemma~\ref{lemma:uniqueness:general}, we can show $\Z = \bzero$ and thus $$\fro{\X} = \fro{\bmB_{\La}'(\X)} \leq \sqrt{\frac{Kn^2N_pm}{N^2}}\delta.$$
So combine these two cases and we get
$$\fro{\X} \leq C\sqrt{\max\{n^2,N_p\}n^2N_p}K\delta,$$
for some absolute constant $C>0$.

Now notice we have the following inequality:
$$\min_{\oom}c_{\oom}\cdot\fro{\z - \z^*}^2 \leq \fro{\bmG(\z) - \bmG(\z^*)}^2 = \fro{\X}^2.$$
This together with $\min_{\oom}c_{\oom} \geq \frac{KN_pn^2}{MN^2}$ and we get
$$\fro{\z^* - \z}^2\leq CMKN^2\max\{n^2,N_p\}\delta^2.$$
for some absolute constant $C > 0 $ and we finish the proof of the theorem.

\section{Numerical experiments}\label{sec:numerical}
In this section, we present numerical experiments related to the image inpainting problem based on the model \eqref{rankmin:noisefree}.
As the experiment result is not our main focus, we just show the result for completeness of this manuscript.
In fact, there has been a plethora of work implementing variations of our proposed algorithms (see e.g. \cite{S.Gu2014,H.Ji2010,L.Ma2017}).

For each of the images, 20\% of the pixels are randomly revealed.
In our algorithm, a reference image is required for grouping.
To this end, we use projected gradient descent to get a rough estimation from the following problem,
$$\min_{\z} \frac{1}{2}\|\bGamma \cdot \vec(\z)\|_{2}^2,~\text{s.t.}~\bmP_{\La}(\z) = \bmP_{\La}(\z^*),$$
where $\bGamma = \I_{n_1}\otimes \L_{n_2} +\L_{n_1}\otimes\I_{n_2}$ and $\L_{n}$ is the Laplacian matrix of size $n\times n$ with diagonal $1,2,\ldots,2,1$.
The reference might not be clear enough, but it is sufficient for grouping as our experiments will show.

We evenly choose $K$ reference patches and then find $N_p-1$ most similar patches in a neighborhood of the current patch according to the reference image and this is how we get $\bOm_k, k\in[K]$. Now we use alternating direction method of multipliers (ADMM) to solve \eqref{prob:noisefree}.

In the following we present the experiment result. We conduct image inpainting on the image: Pepper, Barbara, and Fingerprint. For each image, only 20 percent of the pixels are retained. The retained pixels are selected uniformly at random. The results are shown in Figure~\ref{fig:result}.

\begin{figure}
	\centering
	\subfigure[Barbara]{
		\includegraphics[width=1.3in]{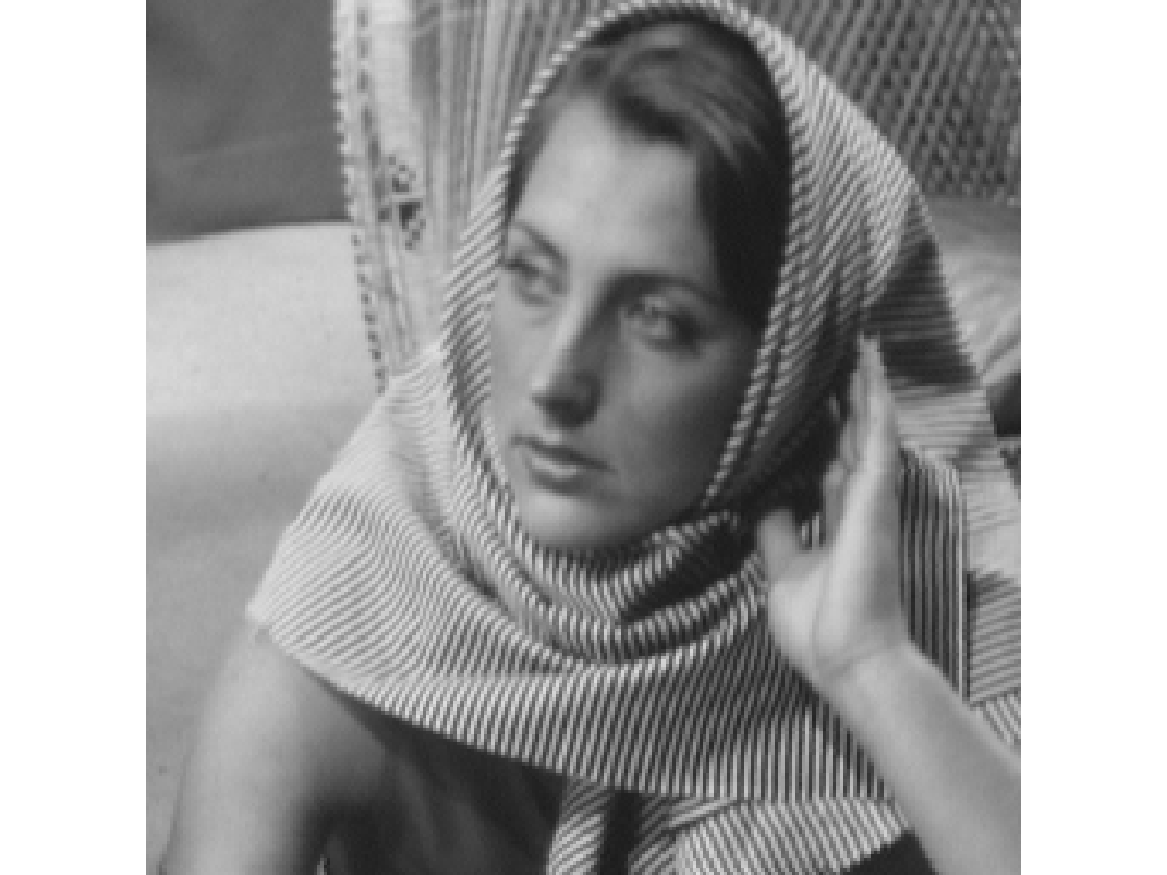}
	}
	\subfigure[20\% subsample]{
		\includegraphics[width=1.3in]{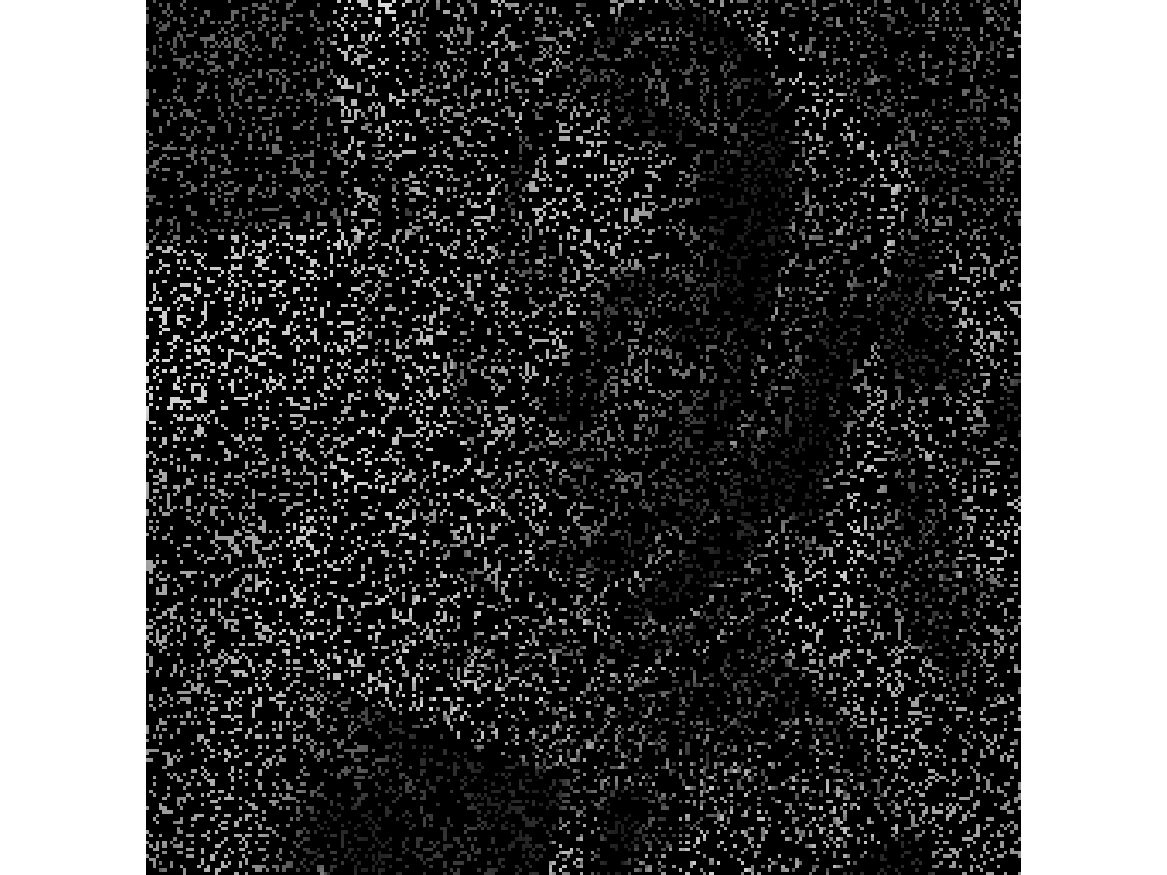}
	}
	\subfigure[25.06dB]{
		\includegraphics[width=1.3in]{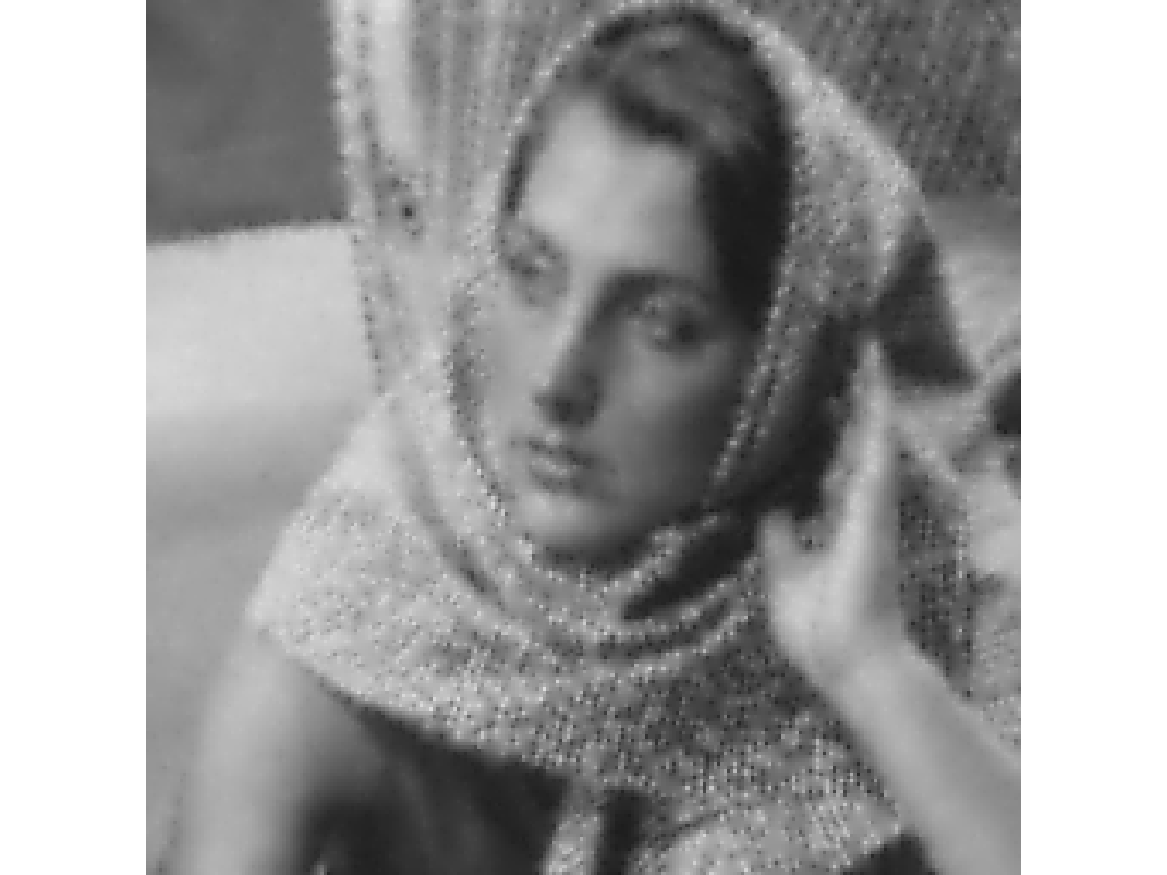}
	}
	\subfigure[28.48dB]{
		\includegraphics[width=1.3in]{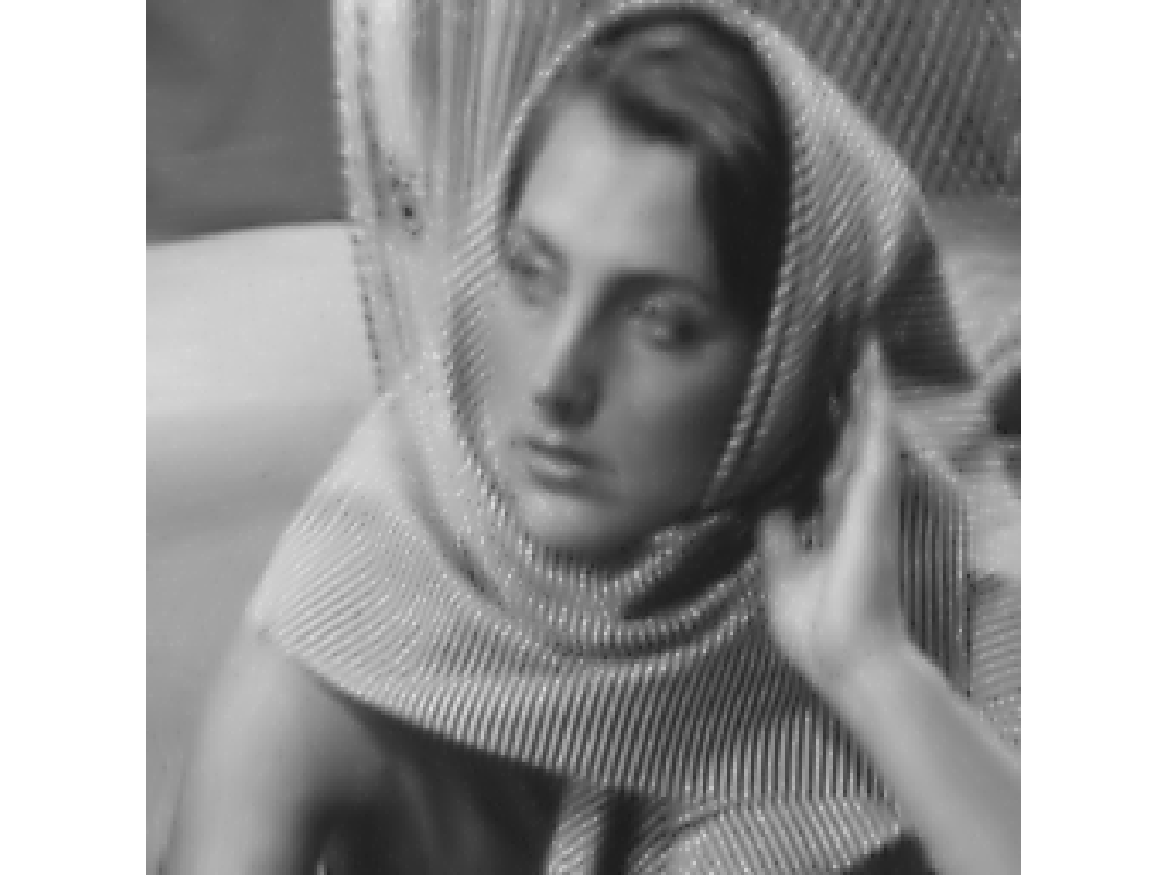}
	}
	\quad
	\subfigure[Fingerprint]{
		\includegraphics[width=1.3in]{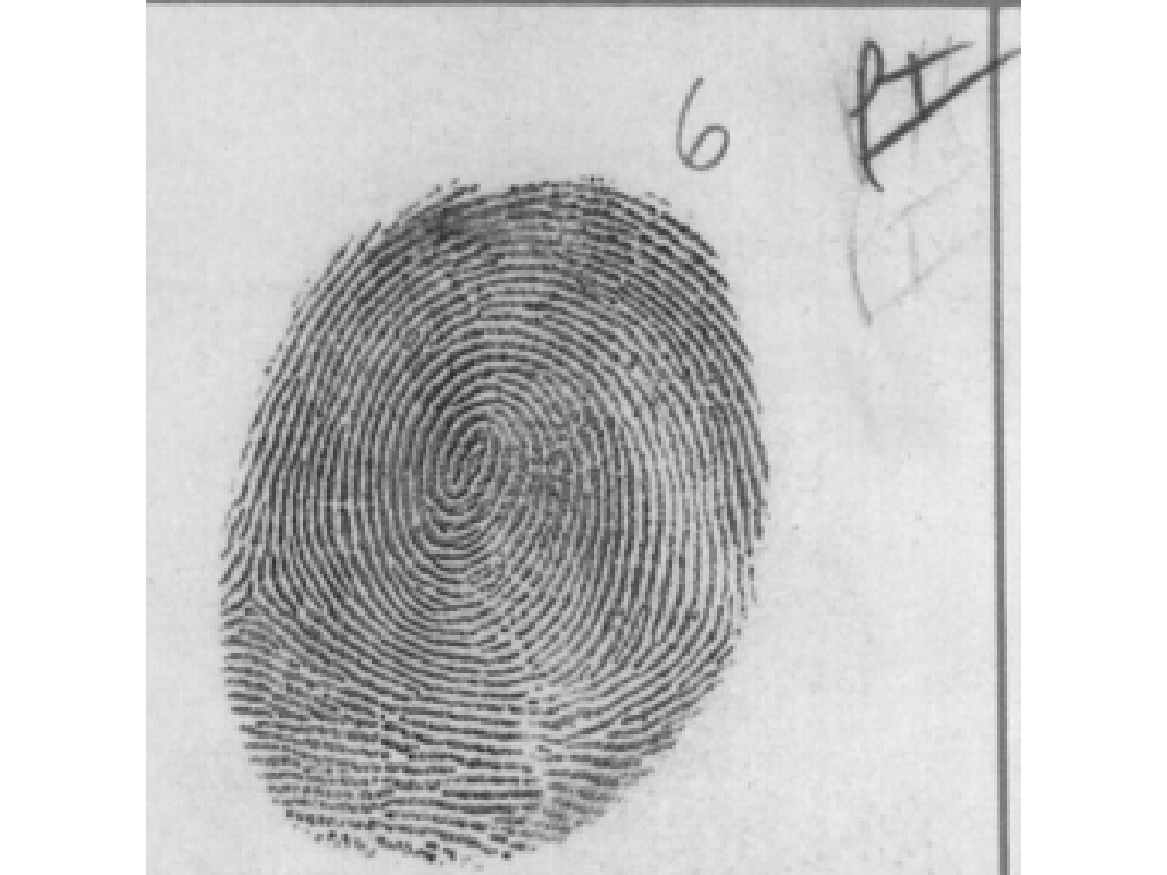}
	}
	\subfigure[20\% subsample]{
		\includegraphics[width=1.3in]{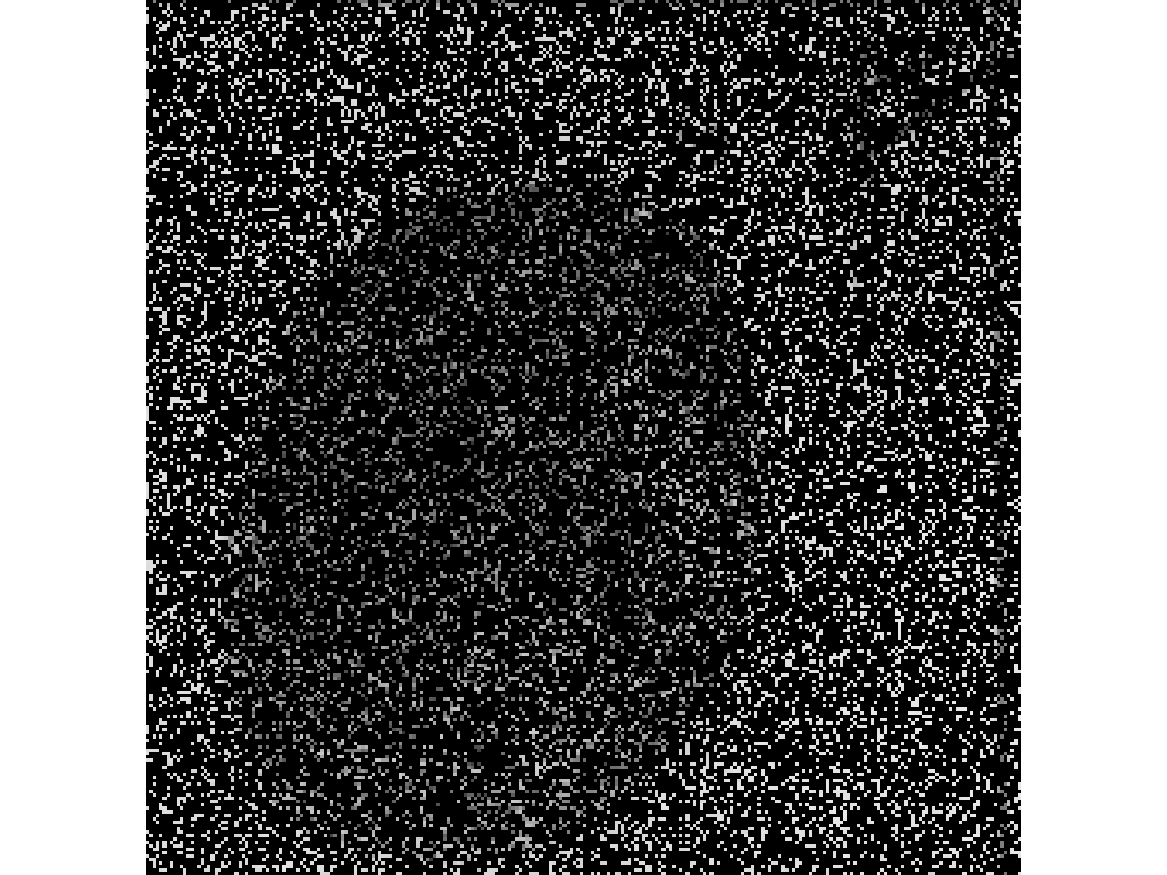}
	}
	\subfigure[22.34dB]{
		\includegraphics[width=1.3in]{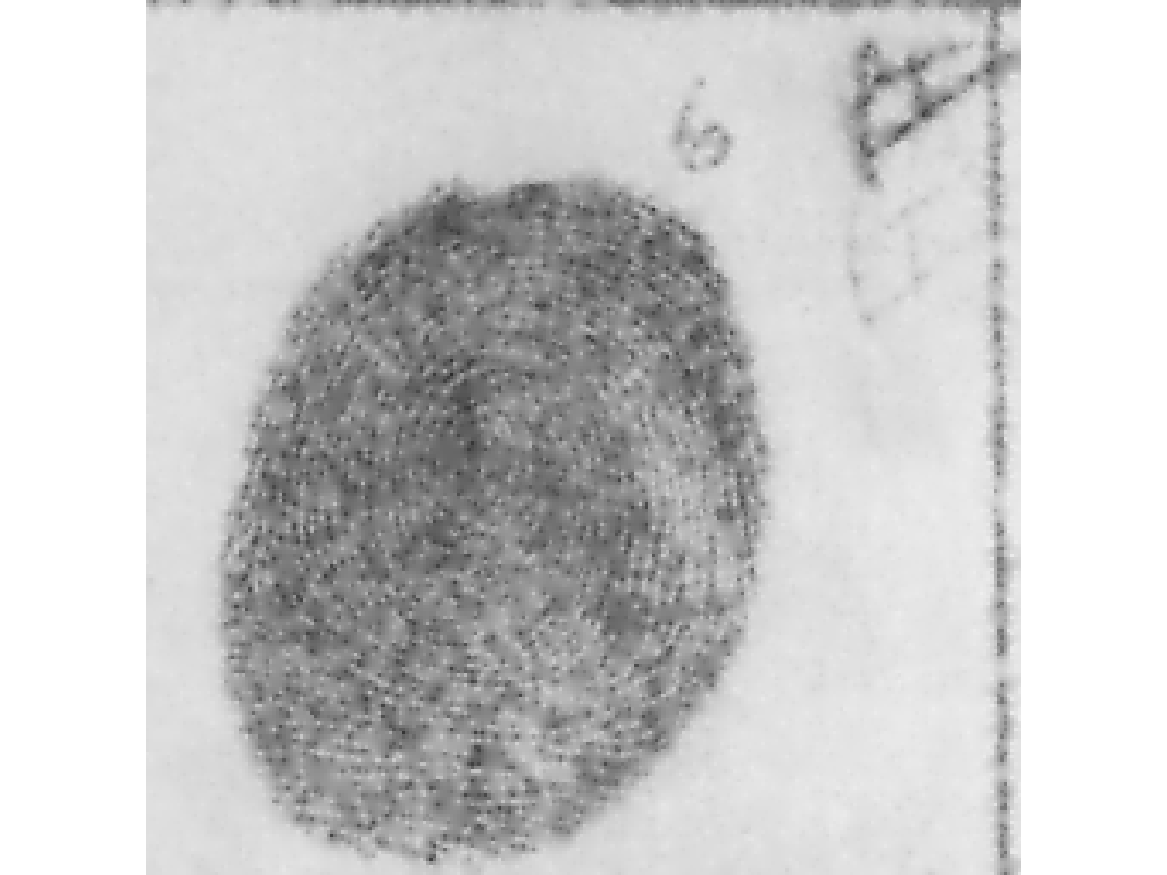}
	}
	\subfigure[24.04dB]{
		\includegraphics[width=1.3in]{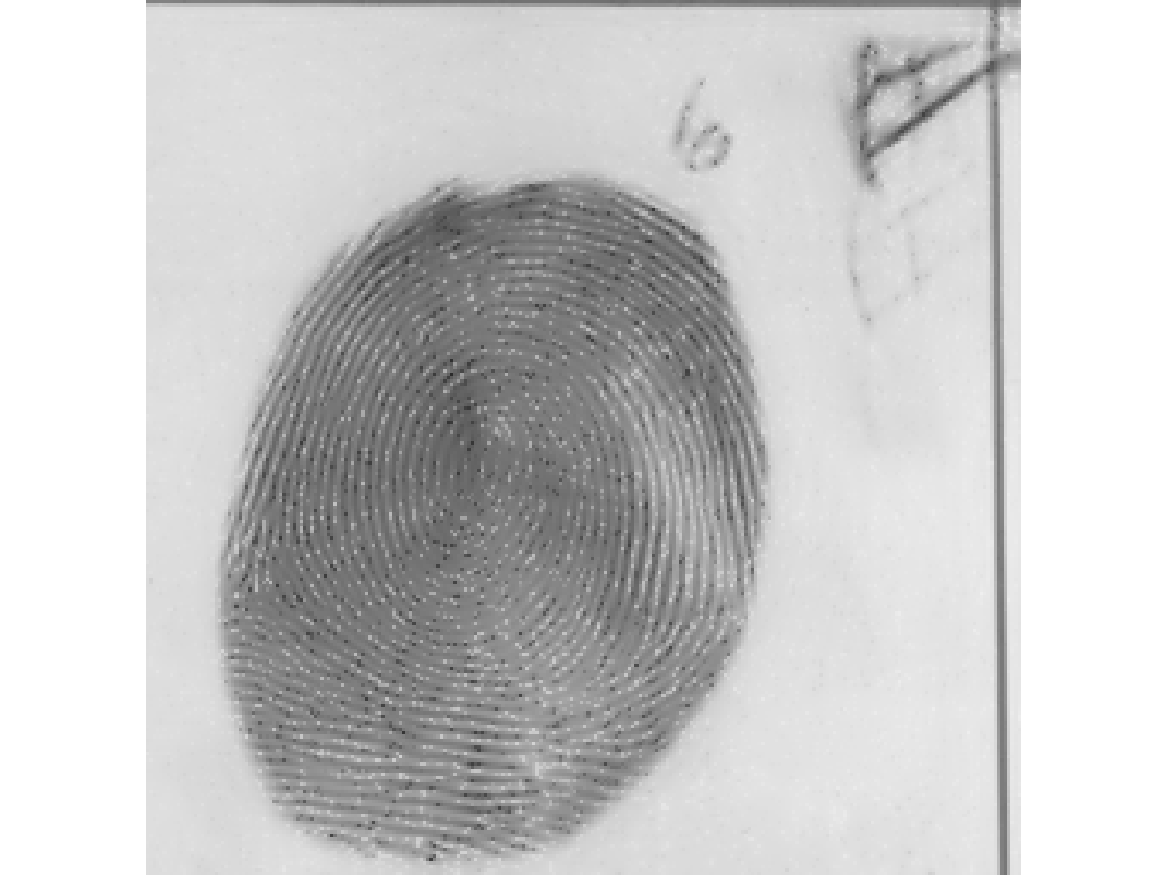}
	}
	\quad
	\subfigure[Lena]{
		\includegraphics[width=1.3in]{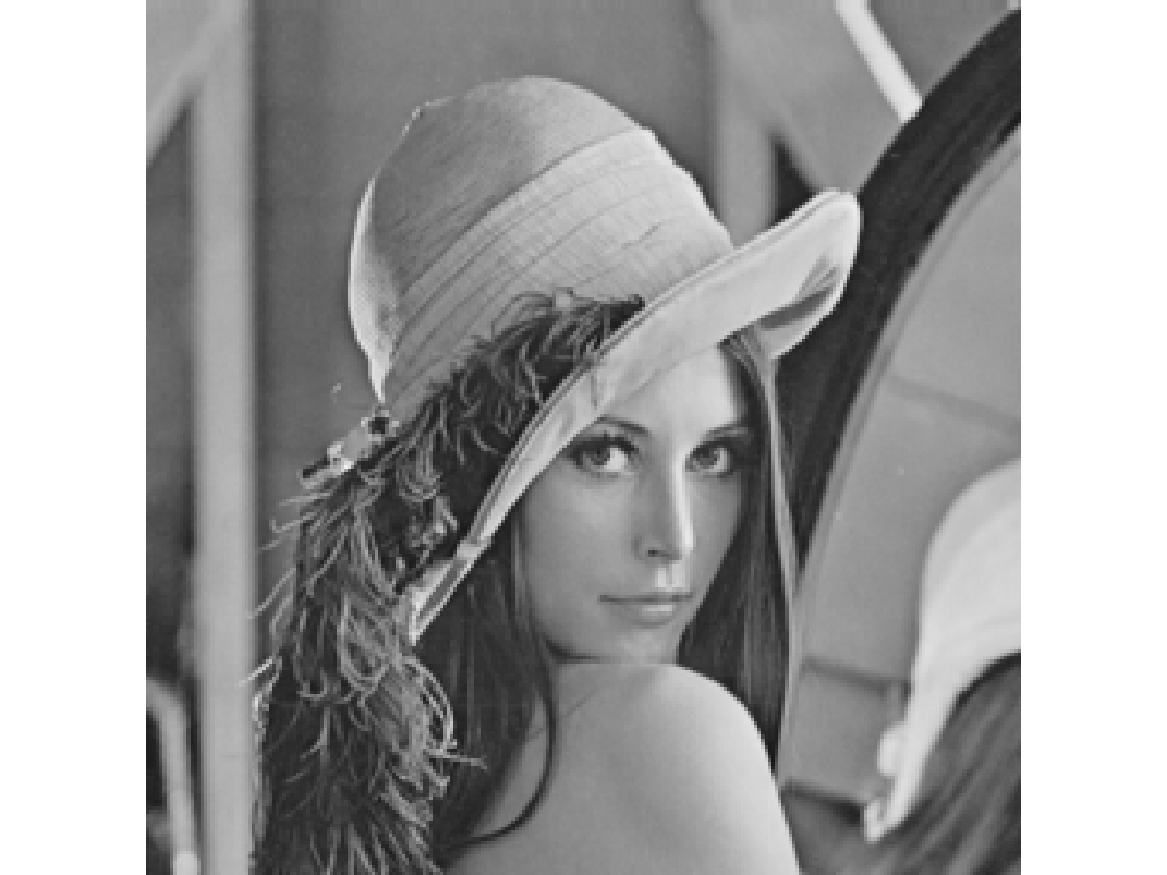}
	}
	\subfigure[20\% subsample]{
		\includegraphics[width=1.3in]{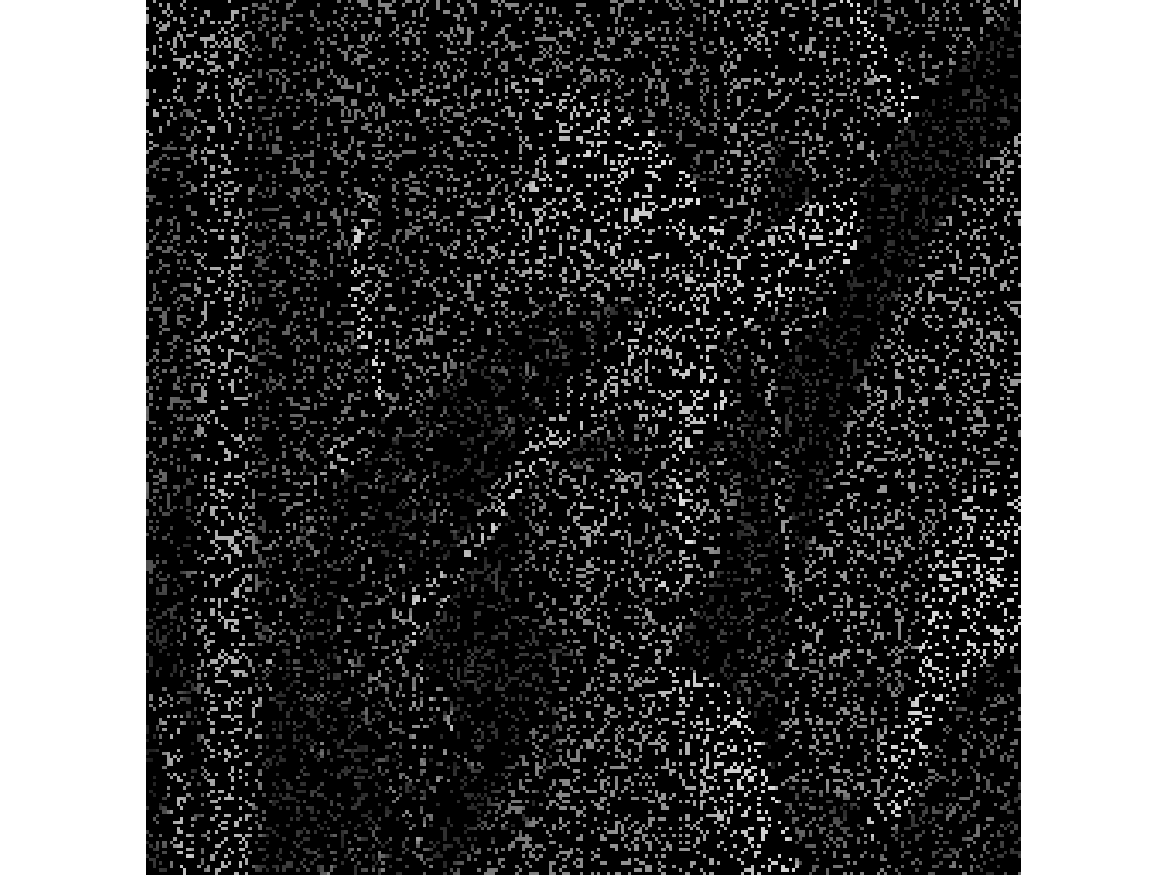}
	}
	\subfigure[26.63dB]{
		\includegraphics[width=1.3in]{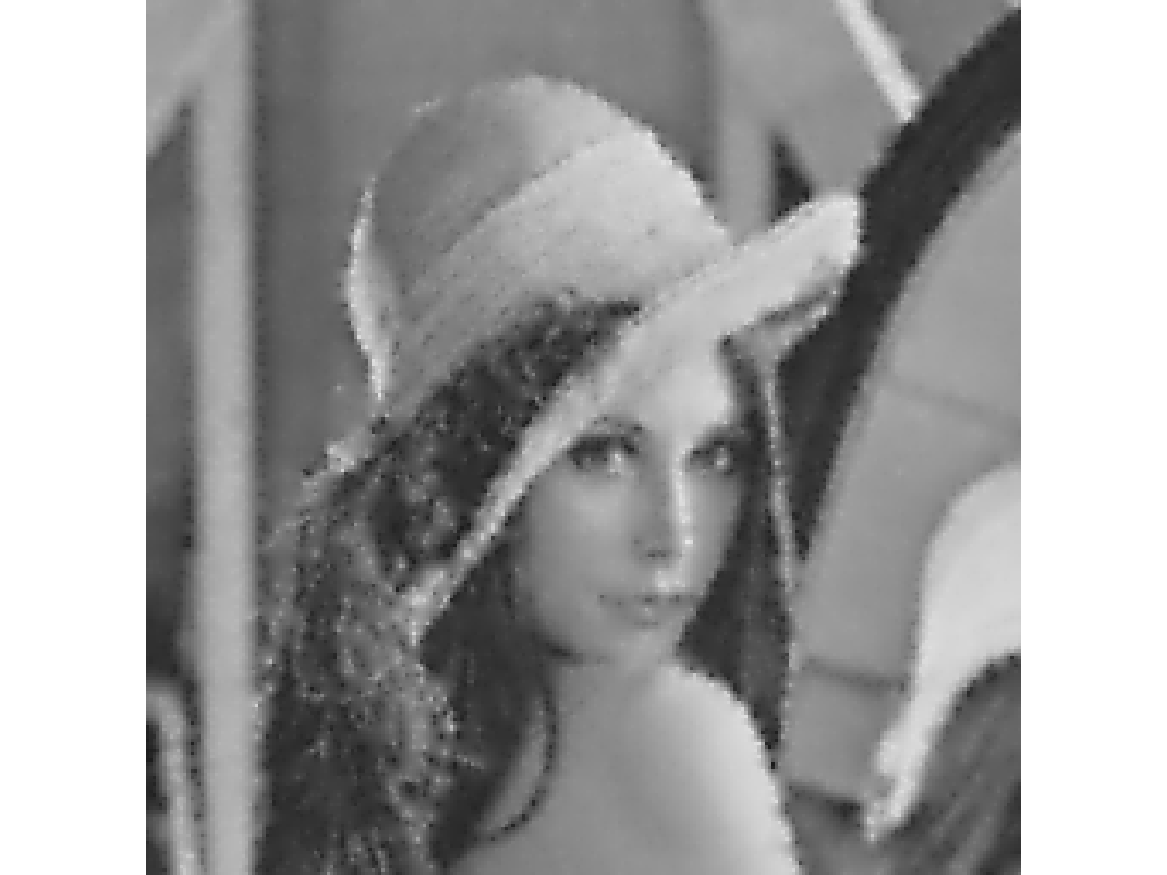}
	}
	\subfigure[28.37dB]{
		\includegraphics[width=1.3in]{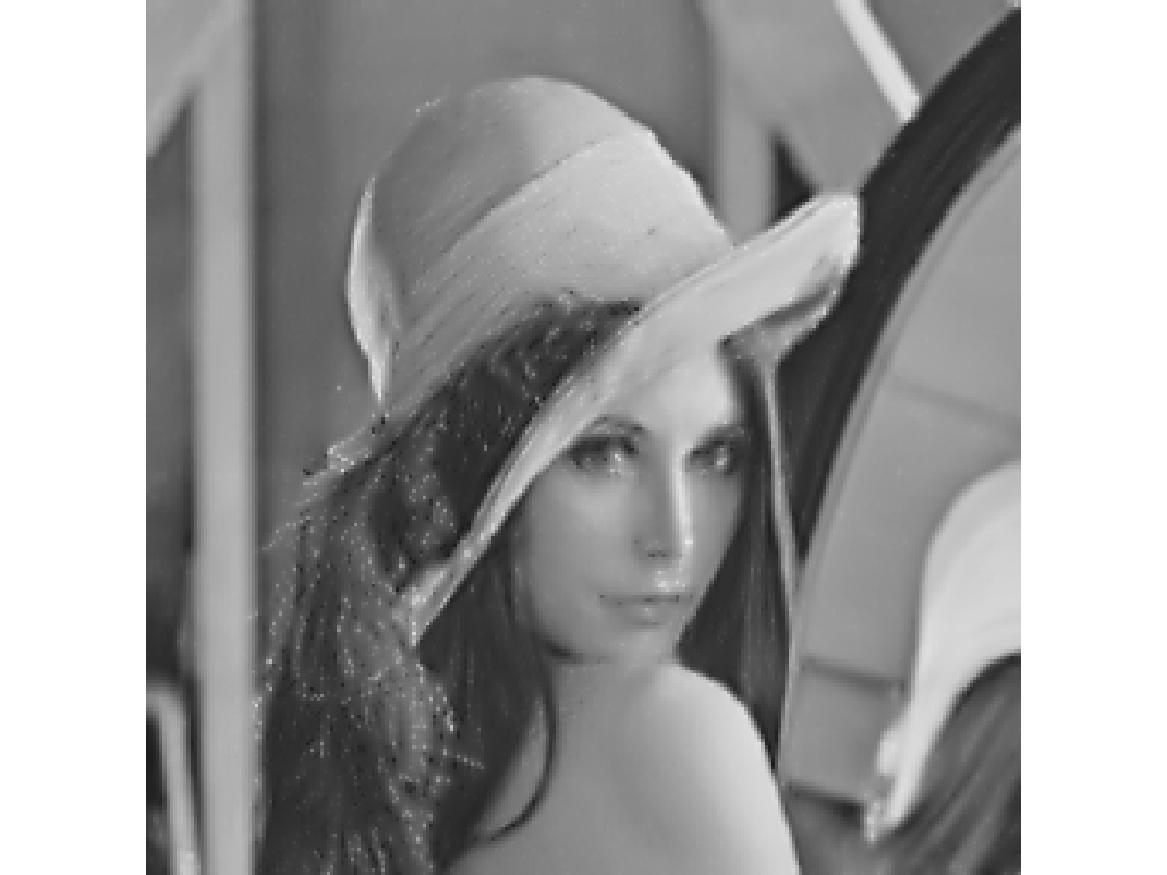}
	}
	\quad
	\subfigure[Pepper]{
		\includegraphics[width=1.3in]{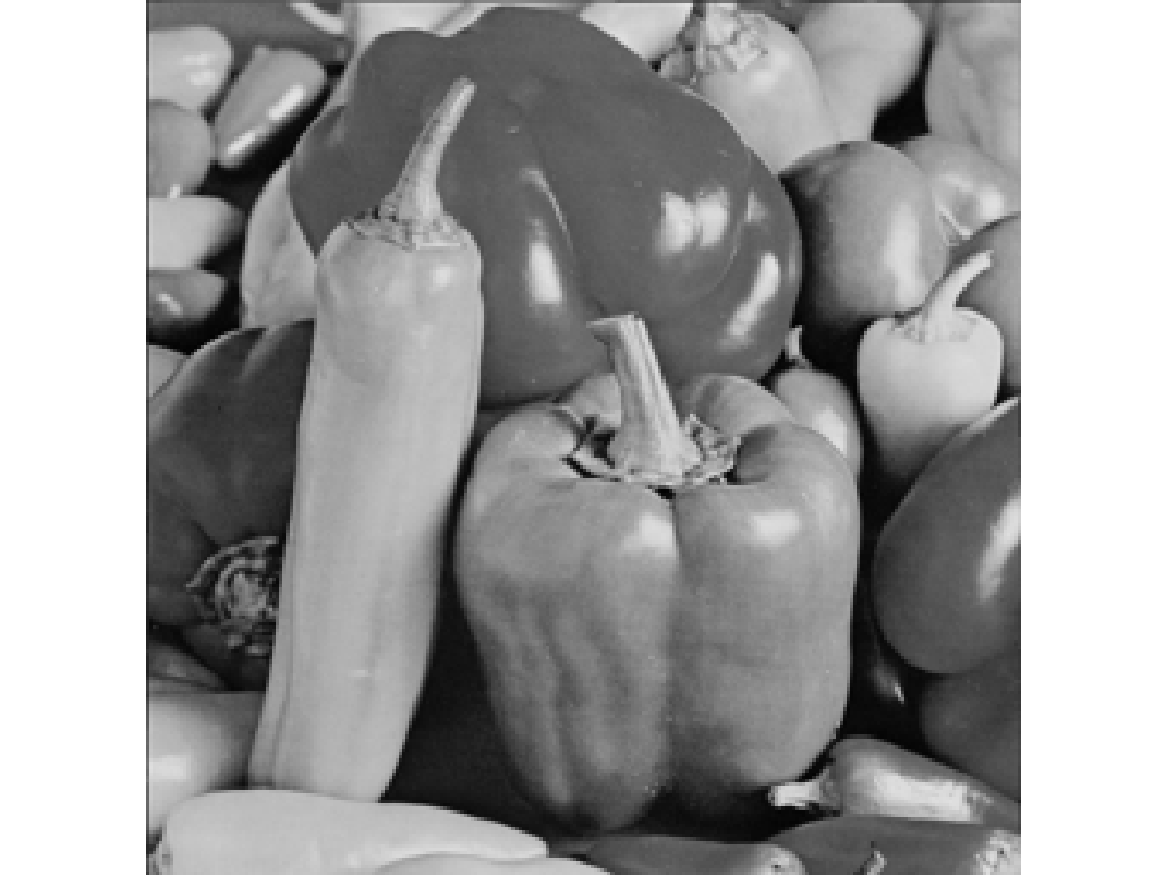}
	}
	\subfigure[20\% subsample]{
		\includegraphics[width=1.3in]{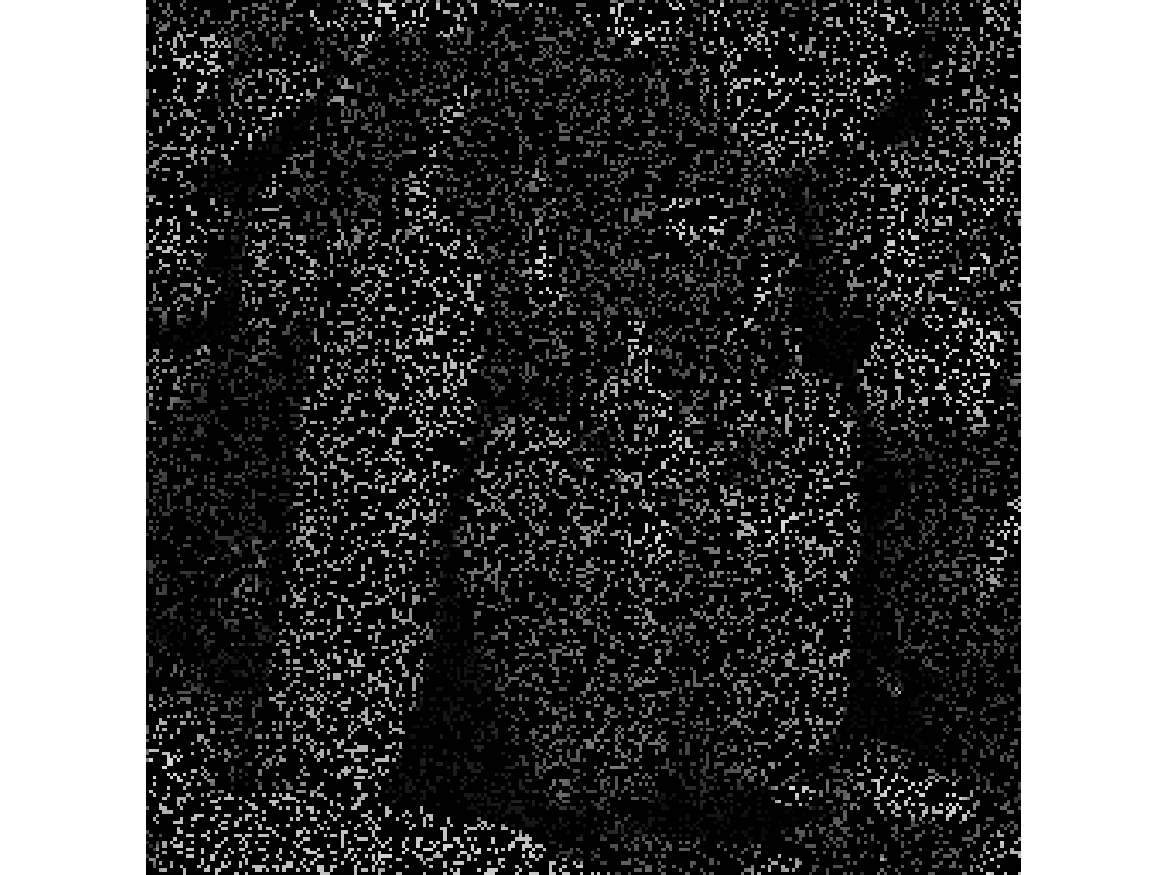}
	}
	\subfigure[26.48dB]{
		\includegraphics[width=1.3in]{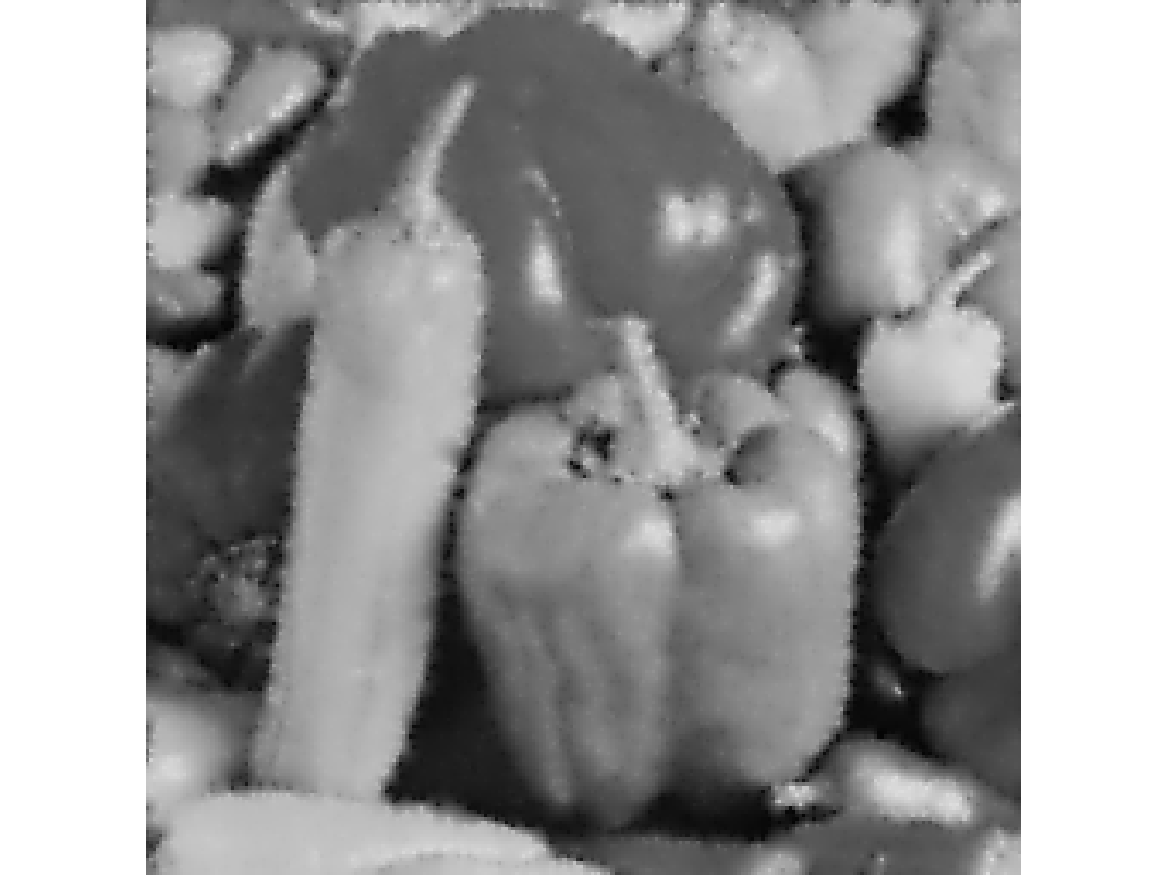}
	}
	\subfigure[28.39dB]{
		\includegraphics[width=1.3in]{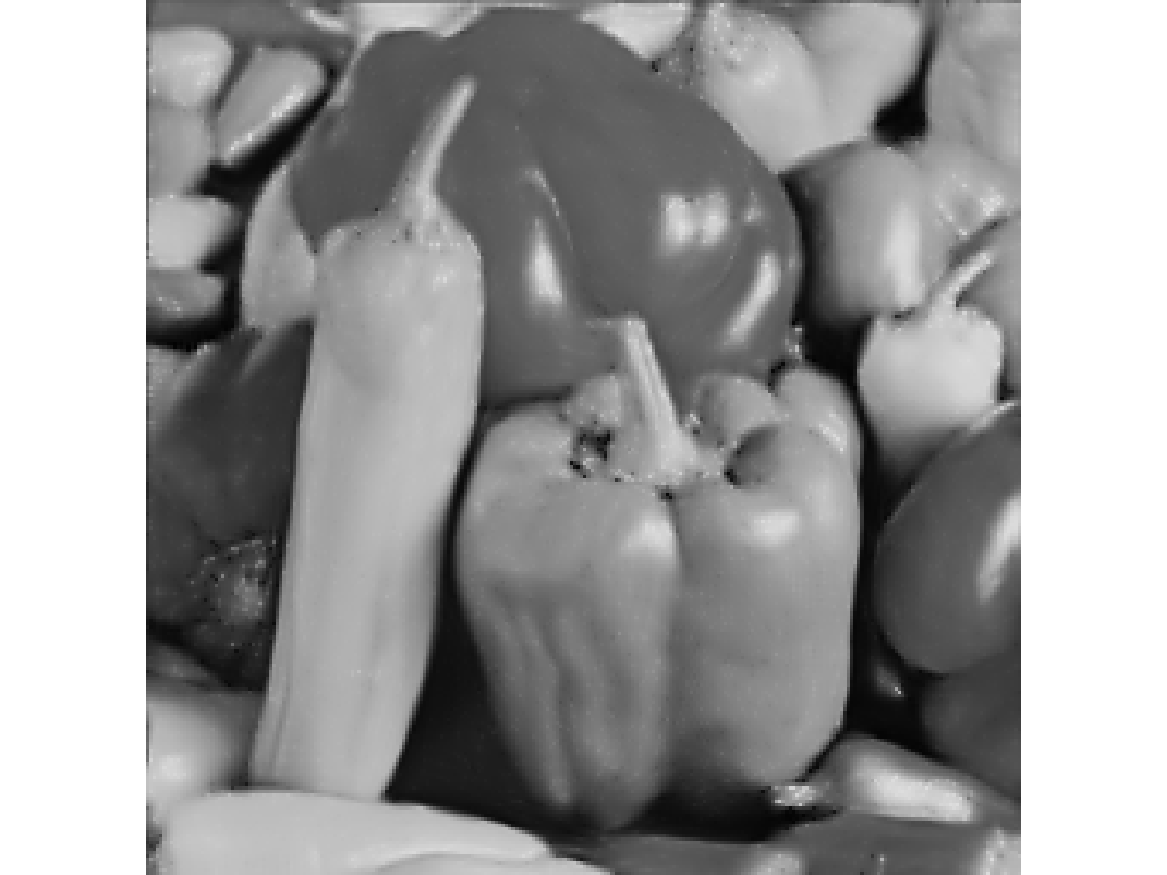}
	}
	\caption{The figures in the first column are the clear images; in the second column, the images with 20\% revealed are shown; in the third column, the reference images with its psnr are displayed; in the last column, we show the recovered images and its psnr.}
	\label{fig:result}
\end{figure}

\section{Conclusion and future works}\label{sec:conclusion}
In this paper, we provide theoretical guarantees for exact recovery of image models based on the patch based low rank prior, which has proven effective in various image restoration problems. Under mild incoherence conditions, we demonstrate that the image can be recovered accurately with only a few measurements. To the best of our knowledge, this is the first work to provide theoretical analysis of the patch-based method.

Moving forward, we would like to address several aspects in future research. Firstly, it remains unclear what is the approximation ability of the set $\MM_r \cap \text{range}(\bmG)$ approximate the true image.
Given a set of parameters much smaller than the image size ($N^2$), it is important to investigate the extent to which we can approximate a specific class of images.
This problem is closely related to variational autoencoders (VAE) \cite{kingma2019introduction}. However, VAE typically requires a large amount of training data for learning the generative model, whereas our focus is on a specific image.

Secondly, since our method restricts the lifted matrix to a specific subspace and ensures its low rank, it is worth considering subspace completion problems. Examples of such problems include matrix completion, where the subspace is the entire space, and the recovery of spectral sparse signals, which involves lifting to a Hankel matrix and considering the subspace of all Hankel matrices. A natural question arises: under what conditions on the sampling basis can we expect successful recovery of the object? While these examples have been extensively studied individually, their analysis heavily relies on the structure of the basis.

Thirdly, most existing patch-based methods treat patches as vectors, potentially leading to the loss of patch-specific information. It would be intriguing to explore low rank tensor completion problems instead. However, determining the appropriate rank model for our purpose remains a challenge. Due to the similarity between patches, we can expect the tensor to be well-approximated by a low rank tensor along the third direction. However, it is difficult to determine the rank along the other two directions. Therefore, it is essential to explore low-parameter models that best describe the underlying structure.

Lastly, the patch-based method can be viewed as a dictionary learning problem, aiming to find a dictionary $\D$ and a sparse coefficient matrix $\C$. In our case, the sparsity of $\C$ is enforced by lifting all the matrices $\bmH(\z^*)\big|_{\Omega_k}, k\in[K]$, to a block diagonal matrix. This implies that both $\C$ and $\D$ in this problem are block diagonal, with zero off-diagonal blocks, resulting in a sparse matrix $\C$. However, there are other possible constraints that can be imposed on the dictionary and coefficients to establish different models. Exploring these alternative decompositions and constraints is an interesting avenue for future research.

\bibliographystyle{plain}
\bibliography{References_PS}
\end{document}